\newcommand{\RR}{{\mathbb R}}
\newcommand{\CC}{{\mathbb C}}
\newcommand{\beq}{\begin{equation}}
\newcommand{\eeq}{\end{equation}}
\newcommand{\ba}{\begin{array}}
\newcommand{\ea}{\end{array}}
\newcommand{\bea}{\begin{eqnarray}}
\newcommand{\eea}{\end{eqnarray}}
\newcommand{\corr}[1]{{\color{black}{#1}}}
\newtheorem{theorem}{Theorem} 
\newtheorem{lemma}{Lemma}[section]
\newtheorem{proposition}{Proposition}[section]
\newtheorem{definition}{Definition} 
\newtheorem{example}{Example}[section]
\numberwithin{equation}{section}
\begin{document}

\begin{center}
{\large   \bf The direct scattering problem for the perturbed $\textrm{Gr}(1, 2)_{\ge 0}$ Kadomtsev-Petviashvili  solitons} 
 
\vskip 15pt

{\large  Derchyi Wu}

\vskip 5pt

{ Institute of Mathematics, Academia Sinica, 
Taipei, Taiwan}

e-mail: {\tt mawudc@gate.sinica.edu.tw}

\vskip 10pt

{\today}

\end{center}

\begin{abstract}
Regular Kadomtsev-Petviashvili (KP)  solitons have been investigated and classified successfully by the Grassmannian. We provide rigorous analysis for the direct scattering problem of perturbed   $\textrm{Gr}(1, 2)_{\ge 0}$ KP solitons. 
\end{abstract}

\section{Introduction}\label{S:motivation}
If the amplitude is small and the wave length is large of a quasi-two dimensional water wave,   then the dynamics can be approximated by the Kadomtsev-Petviashvili II (KPII)  equation 
\beq\label{E:KPII-intro}
\begin{array}{c}
(-4u_{x_3}+u_{x_1x_1x_1}+6uu_{x_1})_{x_1}+3u_{{x_2}{x_2}}=0,\\ u=u({x_1},{x_2},{x_3})\in\mathbb R,~~{x_1},{x_2} \in\mathbb R, {x_3}\ge 0. 
\end{array}
 \eeq Interesting features of the water wave can be reproduced by  the KPII  line solitons which have been discovered in 1970's \cite{S76}, \cite{S79}, \cite{M79}. Precisely,   a regular KPII line soliton can be constructed by 
\beq\label{E:line-tau}
\ba{c}
u(x)=u(x_1,x_2,x_3)= 2\partial^2_{x_1}\ln\tau(x),
\ea
\eeq where the $\tau$-function is given by the Wronskian determinant
\beq\label{E:line-grassmannian}
\ba{c}
\tau(x)=\textrm{Wr}(f_1,f_2,\cdots, f_{N}),\\
f_i(x)=\sum_{j=1}^{M}a_{ij}E_j(x),\ \ 1\le i\le N,\, N<M,\\
E_j(x)=e^{\theta_j}, \, \theta_j=\kappa_j x_1+\kappa_j^2 x_2+\kappa_j^3 x_3,\ \kappa_1<\cdots<\kappa_M,
\\A=(a_{ij})\in \mathrm{Gr}(N, M)_{\ge 0}\subset \mathrm{Gr}(N, M),  
\ea
\eeq  the Grassmannian  $\mathrm{Gr}(N, M)$ denotes the set of $N$-dimensional subspaces in $\RR^M$, and $\mathrm{Gr}(N, M)_{\ge 0}$ is the subset of elements whose maximal minors   are all non-negative  \cite{BP214}, \cite{K17}. Since 2000's, there has been important progress in studying properties and classification theory of these   KPII line solitons (see \cite{K17}, \cite{K18} and references therein). Throughout this report,  \eqref{E:line-tau} defined by \eqref{E:line-grassmannian}, are called {\bf $\mathbf {\textrm{Gr}(N, M)_{\ge 0}}$ KP  solitons} for simplicity.

The well-posedness problem of the KPII equation \eqref{E:KPII-intro} 
with  initial data  $u_c(x,y)$ where $u_c(x-ct,y)$ is a KP solution  has been   solved by Molinet-Saut-Tzvetkov \cite{MST11}.  Their result shows that the deviation of the KPII solution from the initial data could evolve  exponentially.  Taking 
\beq\label{E:KdV-1soliton}
\ba{c}
N=1,\ M=2,\ \kappa_1=-\kappa_2,\ A=(1, 1) 
\ea
\eeq   which are the simplest KPII $1$-line solitons produced by the KdV $1$-soliton solutions, Mizumachi establishes excellent $L^2$-{  orbital stability  and $L^2$-                                                    instability theories for $\textrm{Gr}(1, 2)_{\ge 0}$ KP solitons  \cite{M15}.  

An important alternative approach to study the stability problem of $\textrm{Gr}(N, M)_{\ge 0}$ KP  solitons is the {\bf inverse scattering theory} (IST) based on  the {\bf Lax pair}
\begin{equation}\label{E:KPII-lax-1}
\begin{split}
\left\{ 
{\begin{array}{l}
 (-\partial_{x_2}+\partial_{x_1}^2+u )\Psi(x,\lambda)=0,\\
 (-\partial_{x_3}+ \partial_{x_1}^3+\frac 32u\partial_{x_1}+\frac 34u_{x_1}+\frac 34\partial_{x_1}^{-1}u_{x_2}\corr{-}\lambda^3   )\Psi (x,\lambda)=0 
\end{array}}
\right.
\end{split}
\end{equation}  
of the KPII equation. Indeed, the IST is to establish a bijective maps between the Lax equation 
\beq\label{E:Lax-intro}
\ba{c}(-\partial_{x_2}+\partial_{x_1}^2 
+u)\Psi(x ,\lambda)=0,
\ea\eeq(defined by the KP  solution) and a Cauchy integral equation (defined by the scattering data of the Lax equation). 
Substantial and important works  on   algebraic characterization and formal IST   have been studied by Boiti-Pempenelli-Pogrebkov-Prinari \cite{BPPP01-p}, \cite{BP302}, \cite{BP211}, \cite{BP214}, Villarroel-Ablowitz \cite{VA04}.
  In particular, the most remarkable characteristic, discontinuities for the Green function and eigenfunction of the Lax equation \eqref{E:Lax-intro}
   were discovered by Boiti, Pempenelli, Pogrebkov, and Prinari (cf \cite{BPPP01-p}, \cite{BP302}, \cite{BP214}). But a rigorous IST for perturbed $\textrm{Gr}(N, M)_{\ge 0}$ KP  solitons is still open.

Under the assumption \eqref{E:KdV-1soliton}, based on a KdV theory \cite{NMPZ84}, \cite{AC91},  rigorous analysis for the direct scattering theory of perturbed $\textrm{Gr}(1, 2)_{\ge 0}$ KP solitons  has been carried out in \cite{Wu18}. To generalize the  theory to arbitrary perturbed $\textrm{Gr}(N, M)_{\ge 0}$ KP  solitons, the Sato (or $\tau$-function) approach  \cite{BP214} is not avoidable.  The goal of this report is to 
adopt the Sato approach to provide a rigorous theory of the direct problem  for general   perturbed $\textrm{Gr}(1, 2)_{\ge 0}$ solitons which consists of all  KPII $1$-line solitons with oblique directions and phase shifts. More precisely, using the convention $x=(x_1,x_2,0)$, $k=(k_1,k_2)$, $k_j\ge 0$, $\partial_x^k=\partial_{x_1}^{k_1}\partial_{x_2}^{k_2}$, $|x|=|x_1|+|x_2|$, $|k|=k_1+k_2$, our results are stated as: for
\[
 \ba{c}
 u(x )=u_0(x)+v_0(x ),\\
 u_0(x) 
= \frac{(\kappa_1-\kappa_2)^2}2\textrm{sech}^2\frac{\theta_1(x)-\theta_2(x)-\ln a}2 ,\  \ v_0(x)\in\RR,\\
  \partial_x^k v_0\in {L^1}\cap L^\infty,    |k|\le 4,\ \ { |v_0 |_{{L^1\cap L^\infty}}\ll 1 ,}
 \ea
 \] 
\[\ba{rl}
i.&\mbox{We prove the existence of the eigenfunction of the Lax equation}\\
&\mbox{\eqref{E:Lax-intro} by establishing uniform estimates of the Green function};\\
ii.&\mbox{We justify a Cauchy integral equation for the eigenfunction by}\\
&\mbox{deriving estimates of the  spectral transform}.
\ea\]

 The contents of the paper are as follows. In Section \ref{S:oblique-green}, for a Lax equation \eqref{E:Lax-intro} defined by a  perturbed $\textrm{Gr}(1, 2)_{\ge 0}$ KP soliton,  we introduce a proper boundary data and  the Green function using the Sato theory. Then we provide     algebraic and analytic characterization, including a uniform estimate,    of the Green function.

 In Section \ref{S:oblique-eigenfunction}, we prove the existence and  study   the $\overline\partial$-scattering data of the eigenfunction, define the forward scattering transform $T$, and derive    estimates for the spectral map $\mathcal CTm$ where $\mathcal C$ is the Cauchy integral operator.  Finally, in Section \ref{S:cauchy}, we justify the initial eigenfunction satisfies a singular Cauchy integral equation and show that the singular Cauchy equation reduces to a $\textrm{Gr}(1, 2)_{\ge 0}$ KP soliton if the continuous scattering data is $0$.

{\bf Acknowledgments}. We feel  indebted to A. Pogrebkov and Y. Kodama   for introducing  the  Sato theory of the  KP hierarchy.  We would like to pay  respects to the pioneer IST theory done by Boiti, Pempinelli, Pogrebkov, Prinari. This research project was  partially supported by NSC 107-2115-M-001 -002 -.

\section{The Green function} \label{S:oblique-green}
Setting $x_3=0$, $N=1$, $M=2$, $\kappa_1<\kappa_2$,  $A=(1,a) $, $a>0$ in  \eqref{E:line-tau} and \eqref{E:line-grassmannian}, we obtain
\[
\ba{c}
\tau(x)=\det \textrm{Wr}(f_1)=e^{\theta_1}+  e^{\theta_2+\ln a}=2e^{\frac{\theta_1+\theta_2+\ln a}2}\cosh{\frac{\theta_1-\theta_2-\ln a}2}  
\ea\] and   the $\textrm{Gr}(1, 2)_{\ge 0}$ KP soliton   
\beq\label{E:line-tau-oblique}
\ba{c}
 u_0(x) 
=  \frac{(\kappa_1-\kappa_2)^2}2\textrm{sech}^2\frac{\theta_1(x)-\theta_2(x)-\ln a}2.

\ea
\eeq  
For the Lax equation \eqref{E:Lax-intro}, defined by the perturbed $\textrm{Gr}(1, 2)_{\ge 0}$ KP soliton 
\beq\label{E:potential}
\ba{c}
u(x)=u_0(x)+v_0(x),
\ea
\eeq we impose 
the boundary value data
\begin{equation}\label{E:kp-line-normal}
\ba{l}
 {\Psi(x,\lambda) \to\varphi(x,\lambda) ,\ \ x  \to\infty,}
\ea
\end{equation}  where 
\beq\label{E:sato}
\ba{rl}
&\varphi(x,\lambda)\\
=&e^{\lambda x_1+\lambda^2 x_2}\frac{\tau(x-[\frac 1\lambda])}{\tau(x)}\\
\equiv &e^{\lambda x_1+\lambda^2 x_2}\frac{e^{\theta_1(x_1-\frac 1\lambda,x_2-\frac 1{2\lambda^2},\cdots )}+  ae^{\theta_2(x_1-\frac 1\lambda,x_2-\frac 1{2\lambda^2},\cdots )}}{e^{\theta_1}+ a e^{\theta_2 }} \\
=&e^{\lambda x_1+\lambda^2 x_2}\frac{(1-\frac{\kappa_1}\lambda)e^{\theta_1 }+   (1-\frac{\kappa_2}\lambda)ae^{\theta_2}}{e^{\theta_1}+ a e^{\theta_2 }} \\
\equiv&e^{\lambda x_1+\lambda^2 x_2}\chi(x,\lambda)
\ea
\eeq  is the {\bf Sato eigenfunction} and $\chi(x,\lambda)$ is the {\bf Sato normalized eigenfunction} \cite[(2.12)]{BP214}, \cite[Theorem 6.3.8., (6.3.13) ]{D91}, \cite[Proposition 2.2, (2.21)]{K17} satisfying 
\beq\label{E;spectral}
\ba{l}
\mathcal L\varphi(x,\lambda)\equiv \left(-\partial_{x_2}+\partial^2_{x_1}+u_0(x)\right)\varphi(x,\lambda)=0,\\
L\chi(x,\lambda)\equiv \left(-\partial_{x_2}+\partial^2_{x_1}+2\lambda\partial_{x_1}+u_0(x)\right)\chi(x,\lambda)=0.
\ea
\eeq   
If we renormalize the eigenfunction $\Psi(x,\lambda)=e^{\lambda x_1+\lambda^2x_2} m(x,\lambda)$, then the boundary value problem \eqref{E:kp-line-normal} turns into
\beq\label{E:renormal}
\ba{l}
  Lm(x,\lambda)=-v_0(x)m(x,\lambda), \\
 {m(x,\lambda) \to\chi(x,\lambda) ,\ \ x  \to\infty.}
\ea
\eeq Define the Green functions $\mathcal G(x,x', \lambda)$ and $G(x,x', \lambda)$
\beq\label{E:sym-0}
\ba{c}
\mathcal L\mathcal G(x,x', \lambda) =\delta(x-x') ,\ \
L  G(x,x', \lambda)=\delta(x-x'),\\ 
\mathcal G(x,x', \lambda) =e^{ \lambda (x_1-x_1')+\lambda^2 (x_2-x_2') } G(x,x', \lambda).
\ea
\eeq

In the following, we explain one approach of Boiti et al \cite{BP214} to derive the explicit formula of the Green functions. To this aim, we introduce the {\bf Sato adjoint eigenfunction}
\beq\label{E:sato-adjoint}
\ba{rl}
&\psi(x,\lambda)\\
=&e^{-(\lambda x_1+\lambda^2 x_2)}\frac{\tau(x+[\frac 1\lambda])}{\tau(x)}\\
\equiv &e^{-(\lambda x_1+\lambda^2 x_2)}\frac{e^{\theta_1(x_1+\frac 1\lambda,x_2+\frac 1{2\lambda^2},\cdots )}+  ae^{\theta_2(x_1+\frac 1\lambda,x_2+\frac 1{2\lambda^2},\cdots )}}{e^{\theta_1}+ a e^{\theta_2 }}\\
=&e^{-(\lambda x_1+\lambda^2 x_2)}\frac{\frac {e^{\theta_1 }}{(1-\frac{\kappa_1}\lambda)}+   \frac{ae^{\theta_2}}{(1-\frac{\kappa_2}\lambda)}}{e^{\theta_1}+ a e^{\theta_2 }}\\
\equiv & e^{-[\lambda x_1+\lambda^2x_2]} \xi(x,\lambda), 
\ea
\eeq      and $\xi(x,\lambda)$, the {\bf Sato normalized adjoint eigenfunction} \cite[(2.12)]{BP214}, \cite[Theorem 6.3.8., (6.3.13) ]{D91},   satisfying  
\beq\label{E;spectral-new}
\ba{l}
\mathcal L^\dagger\psi(x,\lambda)\equiv\left(\partial_{x_2}+\partial^2_{x_1}+u_0(x)\right)\psi(x,\lambda)=0,\\
L^\dagger\xi(x,\lambda)\equiv \left(\partial_{x_2}+\partial^2_{x_1}\corr{-}2\lambda\partial_{x_1}+u_0(x)\right)\xi(x,\lambda)=0.
\ea
\eeq Note that for $\forall x\in\RR^2$ fixed, $\chi(x,\cdot)$   is a rational function  normalized at $ \infty$ and with a simple pole at $ 0$; and $\xi(x,\cdot)$ is a rational function  normalized at $ \infty$  with simple poles at $\kappa_1$, $\kappa_2$, and vanishes at $0$. Let     
\beq\label{E:residue-eigenfunction}
\ba{c}

\varphi_j(x)=\varphi(x,\kappa _j)=e^{\kappa_j x_1+\kappa_j^2x_2} \chi_j(x),
\\
\psi_j(x)=\textit{res}_{\lambda=\kappa_j}\psi(x,\lambda)=e^{-[\kappa_j x_1+\kappa_j^2x_2]}\xi_j(x).

\ea
\eeq 

\begin{lemma}\label{L:reside}
\[
\ba{c}
\sum_{j=1}^2\varphi_j (x)\psi_j(x')=0.
\ea
\]
\end{lemma}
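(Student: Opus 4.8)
The plan is to prove the identity by a direct evaluation of the two factors from the explicit Sato formulas \eqref{E:sato} and \eqref{E:sato-adjoint}, exploiting the fact that both the eigenfunction $\varphi$ and the adjoint eigenfunction $\psi$ collapse to a single term once we specialize $\lambda$ to a node $\kappa_j$. Throughout I abbreviate $\tau(x)=e^{\theta_1(x)}+a\,e^{\theta_2(x)}$ (recall $x_3=0$), which is the common denominator of all four objects.

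First I would compute $\varphi_j(x)=\varphi(x,\kappa_j)$ from the last rational form in \eqref{E:sato}. Its numerator is $(1-\tfrac{\kappa_1}{\lambda})e^{\theta_1}+(1-\tfrac{\kappa_2}{\lambda})a\,e^{\theta_2}$, so at $\lambda=\kappa_j$ the factor $(1-\kappa_j/\lambda)$ kills exactly one term and the prefactor $e^{\kappa_j x_1+\kappa_j^2 x_2}=e^{\theta_j(x)}$ merges with the surviving exponential. This yields
\[
\varphi_1(x)=\frac{\kappa_1-\kappa_2}{\kappa_1}\,\frac{a\,e^{\theta_1(x)+\theta_2(x)}}{\tau(x)},\qquad
\varphi_2(x)=\frac{\kappa_2-\kappa_1}{\kappa_2}\,\frac{e^{\theta_1(x)+\theta_2(x)}}{\tau(x)}.
\]
Next I would compute the residues $\psi_j(x)=\mathrm{res}_{\lambda=\kappa_j}\psi(x,\lambda)$. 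Rewriting the poles as $\tfrac{1}{1-\kappa_j/\lambda}=\tfrac{\lambda}{\lambda-\kappa_j}$, the entire prefactor $e^{-(\lambda x_1+\lambda^2 x_2)}$ contributes no pole, so $\mathrm{res}_{\lambda=\kappa_j}\tfrac{\lambda}{\lambda-\kappa_j}=\kappa_j$ picks out a single term, and the exponentials again cancel ($e^{-\theta_j(x)}e^{\theta_j(x)}=1$), giving the clean expressions $\psi_1(x)=\kappa_1/\tau(x)$ and $\psi_2(x)=\kappa_2 a/\tau(x)$.

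With these in hand the conclusion is immediate: the entire $x'$-dependence reduces to the common factor $1/\tau(x')$ and the entire $x$-dependence to the common factor $e^{\theta_1(x)+\theta_2(x)}/\tau(x)$, so that
\[
\varphi_1(x)\psi_1(x')+\varphi_2(x)\psi_2(x')
=\bigl[(\kappa_1-\kappa_2)+(\kappa_2-\kappa_1)\bigr]\,\frac{a\,e^{\theta_1(x)+\theta_2(x)}}{\tau(x)\,\tau(x')}=0.
\]
There is no genuine analytic obstacle here; the only care required is tracking which of the two terms survives in each specialization and getting the sign of the residue right, after which the vanishing is forced by the antisymmetry $(\kappa_1-\kappa_2)+(\kappa_2-\kappa_1)=0$. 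As a structural cross-check I would note that this cancellation is the rank-one ($N=1$) manifestation of the residue theorem: since $\varphi(x,\cdot)$ is regular at $\kappa_1,\kappa_2$ while $\psi(x',\cdot)$ has simple poles exactly there, the sum $\sum_j\varphi_j(x)\psi_j(x')$ is precisely the sum of the residues of $\varphi(x,\lambda)\psi(x',\lambda)$ at its finite poles, and the matching pole/zero structure of the normalized factors $\chi$ and $\xi$ recorded after \eqref{E;spectral-new} makes this sum vanish.
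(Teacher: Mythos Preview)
Your proof is correct and follows essentially the same approach as the paper: both compute $\varphi_j(x)$ and $\psi_j(x')$ explicitly from the rational Sato formulas \eqref{E:sato}, \eqref{E:sato-adjoint} and observe that the two products are negatives of one another. The only difference is presentational---you first isolate the four factors $\varphi_j,\psi_j$ and then multiply, whereas the paper writes out the two products directly---and your closing residue-theoretic remark is a pleasant structural addendum not present in the original.
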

\begin{proof} Using \eqref{E:sato} and \eqref{E:sato-adjoint}, one obtains
\[
\ba{rl}
&\varphi_1 (x)\psi_1(x')\\
=& \frac{\frac {\kappa_1-\kappa_2   }{\kappa_1}ae^{\theta_1(x)+\theta_2(x)  }}{ e^{\theta_1(x) }+ae^{\theta_2 (x) }}\frac{\kappa_1 }{ e^{\theta_1(x') }+ae^{\theta_2(x')  }}= \frac{(\kappa_1-\kappa_2)ae^{\theta_1(x)+\theta_2(x)}}{( e^{\theta_1(x) }+ae^{\theta_2 (x) })( e^{\theta_1(x') }+ae^{\theta_2 (x') })},\\
&\varphi_2 (x)\psi_2(x')\\
= &
\frac{\frac {-(\kappa_1-\kappa_2)   }{\kappa_2}e^{\theta_1(x)+\theta_2(x)  }}{ e^{\theta_1(x)}+ae^{\theta_2 (x) } }\frac{\kappa_2a }{ e^{\theta_1(x') }+ae^{\theta_2(x')  }} =\frac{-(\kappa_1-\kappa_2)ae^{\theta_1(x)+\theta_2(x)}}{( e^{\theta_1(x) }+ae^{\theta_2 (x) })( e^{\theta_1(x') }+ae^{\theta_2 (x') })}.
\ea
\] 
\end{proof}
 
\begin{lemma}\label{L:green-heat} \cite[Eq. 3.1]{BP214} Let $\theta$ be the Heaviside function. Then
\beq\label{E:green-heat} 
\ba{c}
\mathcal G(x,x',\lambda)=\mathcal G_c(x,x',\lambda)+\mathcal G_d(x,x',\lambda),\\
\mathcal G_c= -\frac{\textrm {sgn}(x_2-x_2')}{2\pi}\int_\RR\theta((s^2-\lambda_I^2)(x_2-x_2')) \varphi(x,\lambda_R+is)\psi(x',\lambda_R+is) ds,\\
\mathcal G_d=-\theta(x_2'-x_2)\{\theta(\lambda_R-\kappa_1)\varphi_1(x)\psi_1(x')+\theta(\lambda_R-\kappa_2)\varphi_2(x)\psi_2(x')\}.
\ea
\eeq 

\end{lemma}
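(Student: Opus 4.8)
The plan is to verify directly that the stated kernel $\mathcal{G}=\mathcal{G}_c+\mathcal{G}_d$ solves $\mathcal{L}\mathcal{G}(x,x',\lambda)=\delta(x-x')$, where $\mathcal{L}=-\partial_{x_2}+\partial_{x_1}^2+u_0$ acts in the unprimed variable $x$. The organizing observation is that for every spectral value $\varphi(\cdot,\mu)$ satisfies $\mathcal{L}\varphi=0$ and each residue eigenfunction $\varphi_j$ satisfies $\mathcal{L}\varphi_j=0$; writing the Lax relation as $(\partial_{x_1}^2+u_0)\varphi=\partial_{x_2}\varphi$ lets me trade the spatial part of $\mathcal{L}$ for $\partial_{x_2}$. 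Consequently, away from $\{x_2=x_2'\}$ the kernel is annihilated, and a source can only arise from an $x_2$-derivative falling on the Heaviside and sign factors.

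First I would treat $\mathcal{G}_c$. Since the weight $H(s):=\mathrm{sgn}(x_2-x_2')\,\theta((s^2-\lambda_I^2)(x_2-x_2'))$ is independent of $x_1$, applying $\partial_{x_1}^2+u_0$ reproduces $\partial_{x_2}[\varphi\psi]$ under the integral, which cancels the term produced by $-\partial_{x_2}$ acting on $\varphi\psi$; what survives is $\mathcal{L}\mathcal{G}_c=\tfrac1{2\pi}\int_\RR(\partial_{x_2}H)\,\varphi\psi\,ds$. The key computation is that $\partial_{x_2}H=\delta(x_2-x_2')$ \emph{for every} $s$ with $s^2\ne\lambda_I^2$: splitting into $s^2>\lambda_I^2$ (where $H=\theta(x_2-x_2')$) and $s^2<\lambda_I^2$ (where $H=-\theta(x_2'-x_2)$) gives $\delta(x_2-x_2')$ in both cases, so the $s$- and $\lambda_I$-dependence drops out. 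Hence $\mathcal{L}\mathcal{G}_c=\tfrac{\delta(x_2-x_2')}{2\pi}\int_\RR\varphi(x,\lambda_R+is)\psi(x',\lambda_R+is)\,ds$. For $\mathcal{G}_d$, differentiating $\theta(x_2'-x_2)$ yields $\mathcal{L}\mathcal{G}_d=-\delta(x_2-x_2')\sum_j\theta(\lambda_R-\kappa_j)\varphi_j(x)\psi_j(x')$. Adding the two, it remains to establish the closure relation, evaluated at $x_2=x_2'$,
\[
\frac1{2\pi}\int_\RR\varphi\psi\,ds-\sum_{j}\theta(\lambda_R-\kappa_j)\varphi_j(x)\psi_j(x')=\delta(x_1-x_1').
\]

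This closure relation is the crux. At $x_2=x_2'$ one has $\varphi(x,\mu)\psi(x',\mu)=e^{\mu(x_1-x_1')}g(\mu)$ with $\mu=\lambda_R+is$ and $g(\mu):=\chi(x,\mu)\xi(x',\mu)$, so that $\tfrac1{2\pi}\int_\RR\varphi\psi\,ds=\tfrac1{2\pi i}\int_{\mathrm{Re}\,\mu=\lambda_R}e^{\mu(x_1-x_1')}g(\mu)\,d\mu$. The decisive structural facts are that $g$ is rational in $\mu$ with $g(\mu)\to1$ as $\mu\to\infty$, and that the simple pole of $\chi$ at $0$ is cancelled by the simple zero of $\xi$ at $0$, leaving $g$ with only the two simple poles at $\kappa_1,\kappa_2$ inherited from $\xi$. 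Splitting $g=1+(g-1)$, the constant part gives $\tfrac1{2\pi i}\int_{\mathrm{Re}\,\mu=\lambda_R}e^{\mu(x_1-x_1')}\,d\mu=\delta(x_1-x_1')$, while $(g-1)=O(1/\mu)$ lets me close the contour by Jordan's lemma in the left half-plane when $x_1>x_1'$ and in the right half-plane when $x_1<x_1'$. A residue computation identifies $\mathrm{Res}_{\mu=\kappa_j}\,e^{\mu(x_1-x_1')}g(\mu)=\varphi_j(x)\psi_j(x')\big|_{x_2=x_2'}$, and Lemma \ref{L:reside} (total residue zero) guarantees that closing left or right produces the \emph{same} sum $\sum_j\theta(\lambda_R-\kappa_j)\varphi_j\psi_j$; this matches the discrete term and yields the closure relation, whence $\mathcal{L}\mathcal{G}=\delta(x_1-x_1')\delta(x_2-x_2')=\delta(x-x')$.

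I expect the main obstacle to be this closure relation: both the algebra (verifying the $0$/$\kappa_j$ pole structure of $g$ and matching the residues to $\varphi_j\psi_j$) and the analysis (justifying the oscillatory $s$-integral as a tempered distribution, the contour closure, and the crucial use of Lemma \ref{L:reside} to make the discrete correction independent of the sign of $x_1-x_1'$). By contrast, the identity $\partial_{x_2}H=\delta(x_2-x_2')$, although it is the mechanism that converts the spectral weight into a clean $\delta(x_2-x_2')$, is an elementary but easily mishandled distributional computation that I would carry out with care regarding the conventions for $\mathrm{sgn}$ and $\theta$.
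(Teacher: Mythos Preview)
Your proposal is correct and follows essentially the same route as the paper: both reduce $\mathcal L\mathcal G$ to an $x_2$-derivative hitting the Heaviside/cutoff factors, both establish the closure (orthogonality) relation \eqref{E:ortho} by splitting off the constant part of $\chi\xi$ (Fourier inversion giving $\delta(x_1-x_1')$) and evaluating the rational remainder by residues, and both invoke Lemma~\ref{L:reside} to make the discrete contribution independent of the sign of $x_1-x_1'$. The only cosmetic difference is that the paper presents the argument as first deriving the orthogonality relation and then choosing the cutoff (motivating the particular $\theta((s^2-\lambda_I^2)(x_2-x_2'))$ by the boundedness of $G$), whereas you start from the stated formula and verify it directly.
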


\begin{proof} We follow the proof of \cite{BP211}. Namely, the Green function $\mathcal G$ will be constructed via an orthogonality relation of $\varphi(x,\lambda)\psi(x',\lambda)$ superposed with an appropriate cutoff function. 

To establish an orthogonality relation, note
$
\varphi(x,\lambda+i\lambda')\psi(x',\lambda+i\lambda')\\=e^{ [(\lambda+i\lambda')(x_1- x'_1)+(\lambda+i\lambda')^2(x_2- x'_2)]}\chi(x,\lambda+i\lambda')\xi(x',\lambda+i\lambda')$. 
Hence applying the Fourier inversion theorem, introducing a new variable $\lambda+i\lambda'=\lambda_R+is$, using the residue theorem, and  Lemma \ref{L:reside}, one has
\[
\ba{rl}
&\frac {\delta(x_2-x_2')}{2\pi}\int  \varphi(x,\lambda+i\lambda')\psi(x',\lambda+i\lambda')d\lambda' \\
=&\frac {\delta({x_2-x'_2})}{2\pi}  \int_\RR e^{(\lambda_R+is)(x_1-x_1') }\chi(x,\lambda_R+is)\xi(x',\lambda_R+is) ds

\ea
\]
\beq\label{E:green-c}
\ba{rl}
=&\frac {\delta({x_2-x'_2})}{2\pi}  \int_\RR e^{(\lambda_R+is)(x_1-x_1') }\left( 1+ \sum_{j=1}^2\frac{\chi_j (x)\xi_j(x')}{is+\lambda_R-\kappa_j}\right)ds\\
=&\delta({x -x' })+\corr{\delta({x_2-x'_2}) \sum_{j=1}^2e^{\kappa_j(x_1-x_1') }\chi_j (x)\xi_j(x')   }\\
&\corr{\times [\theta(x_1-x_1')\theta(\lambda_R-\kappa_j)- \theta(x_1'-x_1)\theta(\kappa_j-\lambda_R)  ]} \\
=&\delta({x -x' })+  \sum_{j=1}^2\varphi_j (x)\psi_j(x') \delta({x_2-x'_2}) \theta(\lambda_R-\kappa_j).

\ea
\eeq So we derive an orthogonality relation
\beq\label{E:ortho}
\begin{array}{rl }
  \delta({x-x'})=&\delta({x_2-x'_2})  \{ \frac 1{2\pi}   {\int_{\mathbb R} } \varphi (x,\lambda_R+is)\psi (x',\lambda_R+is) ds\\
	&- {\sum_{j=1}^2 }\varphi_j(x)\psi_j(x')\theta(  \lambda_R-\kappa_j)\}.
	\end{array}\eeq

To construct $\mathcal G$, we \corr{will superpose proper cut off functions to the orthogonal relation, so that  after applying $\mathcal L$, the cut off functions turns into $\delta({x_2-x'_2}) $.} Hence the formula $\mathcal G_d$ in \eqref{E:green-heat} is verified. Furthermore, as $G(x,x',\lambda)$ is expected to be almost bounded, one need to  confine the integral of $\varphi(x,\lambda+i\lambda')\psi(x',\lambda+i\lambda')$ to be on the region where 
$e^{ -\lambda (x_1- x'_1)-\lambda ^2(x_2- x'_2) }\newline\varphi(x,\lambda+i\lambda')\psi(x',\lambda+i\lambda')$ is bounded at $\infty$. This implies the superposed cutoff function chosen is
\[
\ba{c}
\theta(x_2-x'_2)\chi_{-}(\lambda')-\theta(x_2'-x_2)\chi_{+}(\lambda')\\
=\textrm {sgn}(x_2-x_2')\theta((s^2-\lambda_I^2)(x_2-x_2')),
\ea\]
where $\chi_{\pm}$  the characteristic function for  $\{\lambda'|\,\textit{Re}(\left[\lambda+i\lambda'\right]^2-\lambda^2)\gtrless 0\}$. Therefore the formula $\mathcal G_c$ in \eqref{E:green-heat} follows.

\end{proof}

\begin{definition}\label{D:terminology} 
For $z\in \mathfrak Z=\{ 0,\, \kappa_1,\, \kappa_2  \}$,  define 
\corr{$\kappa=\frac 12\min\{|\kappa_1|,\,|\kappa_2|,\,\kappa_2-\kappa_1\}$},
\[
\ba{c}
D_{z}=\{\lambda\in\CC\, :\, |\lambda-z|< \corr{\kappa}\},\ 
   D_z^\times=\{\lambda\in\CC\, :\, 0<|\lambda-z|<\corr{\kappa}\};\\
D_{z,r}=\{\lambda\in\CC\, :\, |\lambda-z|<  r\corr{\kappa}\},\ 
   D_{z,r}^\times=\{\lambda\in\CC\, :\, 0<|\lambda-z|< r\corr{\kappa}\} ,   
 \ea
 \] and characteristic functions  $E_z(\lambda)\equiv 1$ on $D_z$, $E_z (\lambda)\equiv 0$ elsewhere.
  Moreover, define the polar coordinate for   $D_{z,r}^\times$ to be $\{(s,\alpha)|\lambda=z+se^{i\alpha},\ 0<s<r\corr{\kappa,\,0\le\alpha \le 2\pi}\}$. 
 
Through out the report, we use $C$ to denote different uniform constants.        
\end{definition}

\begin{proposition}\label{P:eigen-green}
The Green function $ G$, defined by \eqref{E:sym-0}, satisfies the analytic constraint, \corr{for $\forall x_2-x_2'\ne 0$,   $\lambda\ne \kappa_1,\,\kappa_2$,}
\beq
\ba{c}
| G(x,x',\lambda)|\le C (1+\frac1{\sqrt{|x_2-x_2'|}} ),   \label{E:eigen-green}
\ea
\eeq   
\corr{and for any Schwartz function $ f$,

\beq\label{E:ast}
\ba{c}
\lim_{|x|\to\infty}G(x,x',\lambda)\ast f(x')  \to 0, \\
  G\ast f(x,\lambda)\equiv\iint  G(x,x',\lambda)f(x' )dx' ,\ \ dx'=dx'_1dx'_2.
\ea\eeq} 

\end{proposition}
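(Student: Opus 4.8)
The plan is to start from the closed form $\mathcal G=\mathcal G_c+\mathcal G_d$ of Lemma \ref{L:green-heat} and pass to $G=e^{-\lambda(x_1-x_1')-\lambda^2(x_2-x_2')}\mathcal G=:G_c+G_d$. The single computation that drives everything is that, writing $\mu=\lambda_R+is$ and using $\varphi(x,\mu)\psi(x',\mu)=e^{\mu(x_1-x_1')+\mu^2(x_2-x_2')}\chi(x,\mu)\xi(x',\mu)$, the gauge prefactor cancels the exponential and leaves
\[
e^{-\lambda(x_1-x_1')-\lambda^2(x_2-x_2')}\varphi(x,\mu)\psi(x',\mu)=e^{i\Phi}\,e^{(\lambda_I^2-s^2)(x_2-x_2')}\,\chi(x,\mu)\xi(x',\mu),
\]
with $\Phi=(s-\lambda_I)(x_1-x_1')+2\lambda_R(s-\lambda_I)(x_2-x_2')\in\RR$; thus only the real Gaussian weight $e^{(\lambda_I^2-s^2)(x_2-x_2')}$ and the bounded rational factors $\chi,\xi$ survive. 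I would bound $G_c$ and $G_d$ separately.

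For $G_c$: on the support of the cutoff $\theta((s^2-\lambda_I^2)(x_2-x_2'))$ the exponent $(\lambda_I^2-s^2)(x_2-x_2')$ is $\le 0$, so the integrand is dominated by $|\chi(x,\mu)\xi(x',\mu)|\,e^{-|s^2-\lambda_I^2|\,|x_2-x_2'|}$. Writing $\chi$ and $\xi$ as convex combinations of $1-\kappa_j/\mu$, resp. $\mu/(\mu-\kappa_j)$, against the partition $e^{\theta_1}/\tau+ae^{\theta_2}/\tau=1$ shows the product is uniformly bounded along the contour $\mathrm{Re}\,\mu=\lambda_R$ whenever $\lambda_R$ is off $\{0,\kappa_1,\kappa_2\}$ (the would‑be pole at $\mu=0$ is removable, since $\chi$'s pole meets $\xi$'s zero). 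It then remains to bound $\int_\RR e^{-|s^2-\lambda_I^2|\,|x_2-x_2'|}\,ds\le C/\sqrt{|x_2-x_2'|}$ uniformly in $\lambda_I$; this follows by combining the trivial length bound with an endpoint Laplace estimate through $\min(p,q)\le\sqrt{pq}$, which is exactly what produces the $1/\sqrt{|x_2-x_2'|}$ in \eqref{E:eigen-green}.

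For $G_d$: the crucial point is that the residue identity of Lemma \ref{L:reside} forces $G_d$ to vanish outside the strip $\kappa_1<\lambda_R<\kappa_2$. Indeed, for $\lambda_R<\kappa_1$ both Heaviside factors vanish, while for $\lambda_R>\kappa_2$ both are active and $\varphi_1\psi_1+\varphi_2\psi_2=0$ annihilates the sum; only for $\kappa_1<\lambda_R<\kappa_2$ does the single term $-\theta(x_2'-x_2)\,e^{-\lambda(x_1-x_1')-\lambda^2(x_2-x_2')}\varphi_1(x)\psi_1(x')$ survive. Using $\psi_1(x')=\kappa_1/\tau(x')$ and $\varphi_1(x)=\tfrac{\kappa_1-\kappa_2}{\kappa_1}\,a\,e^{\theta_1(x)+\theta_2(x)}/\tau(x)$, I would majorize $1/\tau(x')$ and $e^{\theta_1(x)+\theta_2(x)}/\tau(x)$ by the one‑term bounds $\tau\ge e^{\theta_1}$ and $\tau\ge ae^{\theta_2}$; the cutoff $x_2<x_2'$ then controls the resulting $x_2$‑exponential (its sign forces decay), while the band condition $\kappa_1<\lambda_R<\kappa_2$ makes the $\textrm{sech}$‑type decay of the $\tau$‑quotient dominate the factor $e^{-\lambda_R(x_1-x_1')}$ in $x_1$, yielding a uniform bound. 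I expect this strip estimate, together with keeping the constant uniform up to the spectral lines $\mathrm{Re}\,\lambda\in\{0,\kappa_1,\kappa_2\}$ (where the contour meets the genuine poles $\mu=\kappa_j$ of $\chi\xi$ and a principal value encoding the BPPP discontinuity must be read off), to be the main obstacle.

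Finally, for \eqref{E:ast}: the uniform bound just proved makes $G\ast f$ absolutely convergent with the $x$‑independent majorant $C(1+|x_2-x_2'|^{-1/2})|f(x')|$, integrable in $x'$, so it suffices to prove $G(x,x',\lambda)\to0$ pointwise in $x'$ as $|x|\to\infty$ and invoke dominated convergence, splitting off the moving $|x_2-x_2'|^{-1/2}$ singularity and absorbing the tails into the Schwartz decay of $f$. The pointwise decay is read from the same formula: in $G_c$ the Gaussian weight forces decay as $x_2\to\pm\infty$, and the oscillation $e^{i(s-\lambda_I)(x_1-x_1')}$ gives decay as $x_1\to\pm\infty$ by Riemann--Lebesgue; in $G_d$ the cutoff $\theta(x_2'-x_2)$ removes the only growing direction $x_2\to+\infty$ for fixed $x'$, and the remaining directions decay exponentially.
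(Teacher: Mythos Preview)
Your treatment of $G_d$ via the residue identity of Lemma~\ref{L:reside} and of $G_c$ for $\lambda_R$ bounded away from $\{\kappa_1,\kappa_2\}$ is correct and matches the paper's Steps~1--2. The decay statement \eqref{E:ast} via dominated convergence and Riemann--Lebesgue is also right and is exactly what the paper does.

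The genuine gap is the one you yourself name: uniformity of the constant as $\lambda_R\to\kappa_j$. Your bound on $|\chi(x,\mu)\xi(x',\mu)|$ along the line $\mathrm{Re}\,\mu=\lambda_R$ blows up like $|\lambda_R-\kappa_j|^{-1}$ because $\xi$ has a genuine simple pole at $\mu=\kappa_j$ (unlike at $\mu=0$, where you correctly note the cancellation with $\chi$). The Gaussian integral cannot absorb this, so as stated your argument gives $|G_c|\le C(\lambda)(1+|x_2-x_2'|^{-1/2})$ with $C(\lambda)\to\infty$ as $\lambda\to\kappa_j$, which is not the claim. This is not a technicality: the bulk of the paper's proof (Steps~3--5) is devoted to exactly this region.

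The paper's resolution for $\lambda\in D_{\kappa_j}^\times$ is to split the $s$-integral into four pieces $I_j+II_j+III_j+IV_j$ (see \eqref{E:green-kappa}): $III_j$ isolates the pure residue $\int \frac{\chi_j(x)\xi_j(x')}{\lambda_R+is-\kappa_j}\,ds$ over $|s|<\kappa$, computed explicitly via the logarithm/$\cot^{-1}$ formula \eqref{E:approach}; $II_j$ is the regular remainder after subtracting this residue; $IV_j$ is the tail $|s|>\kappa$, where the pole is far and your Gaussian argument applies; and $I_j$ carries the interaction between the pole and the oscillation, namely $\int \frac{e^{i\Phi+(\lambda_I^2-s^2)(x_2-x_2')}-1}{s-i(\lambda_R-\kappa_j)}\,ds$. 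This last piece is the hardest: the paper splits it into $I_j^{in}$ (over $|s|<|\lambda_I|$) and $I_j^{out}$, handles $I_j^{in}$ by a case analysis on the sign of $(\lambda_R-\kappa_j)(y_1+2\lambda_R y_2)$ with a contour deformation to a semicircle when the naive bound fails, and handles $I_j^{out}$ by scaling according to whether $|\lambda_I|\sqrt{|y_2|}\gtrless 1$. Your ``principal value encoding the BPPP discontinuity'' is precisely $III_j$, but the nontrivial remainder $I_j$ is where the uniform bound is actually earned.
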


\begin{proof}  From \eqref{E:sym-0} and Lemma \ref{L:green-heat}, one needs to show uniform estimates of $G_d(x,x', \lambda)$ and $G_c(x,x', \lambda)$ which can be written as
\beq \label{E:g-d}
\ba{rl}
& G_d(x,x', \lambda)\\
 =&-e^{\lambda(x_1'-x_1)+\lambda^2(x_2'-x_2)}\sum_{j=1}^2\theta(x_2'-x_2)\theta(\lambda_R-\kappa_j) \varphi_j(x)\psi_j(x') .
\ea\eeq and
\beq\label{E:g-c}
\ba{rl}
&G_c(x,x', \lambda) \\
= &-\frac 1{2\pi}\int_\RR ds\ \mbox{sgn}(x_2-x_2')\theta((s^2-\lambda_I^2)(x_2-x_2')) \chi(x,\lambda_R+is)\\
& \times \xi(x',\lambda_R+is) e^{ [\lambda-(\lambda_R+is)] (x_1'-x_1)+[\lambda^2-(\lambda_R+is)^2](x_2'-x_2) }\\
=&-\frac{e^{i[\lambda_I(x_1'-x_1)+2\lambda_I\lambda_R(x'_2-x_2)]}}{2\pi}
\int_\RR ds\ \mbox{sgn}(x_2-x_2')\theta((s^2-\lambda_I^2)(x_2-x_2')) \\
&\times  e^{(s^2-\lambda^2_I)(x_2'-x_2)- is[  (x_1'-x_1)+2 \lambda_R  (x_2'-x_2)] }\chi(x,\lambda_R+is)\xi(x',\lambda_R+is),\ea
\eeq

$\underline{\emph{Step 1  (Estimates for $G_d$)}}:$ From Lemma \ref{L:reside},  \eqref{E:sato}, \eqref{E:sato-adjoint},  \corr{\eqref{E:residue-eigenfunction}, \eqref{E:g-d}, the dominated convergence theorem, and the Riemann-Lebesque lemma,} one has 
\beq\label{E:discrete-residue-chi}
\ba{rl}
& |G_d(x,x', \lambda) |\\
=&|-e^{\lambda(x_1'-x_1)+\lambda^2(x_2'-x_2)}\theta(x_2'-x_2)\theta(\kappa_2-\lambda_R)\theta(\lambda_R-\kappa_1) \varphi_1(x)\psi_1(x')|\\
=&| - \frac {\theta(x_2'-x_2)\theta(\kappa_2-\lambda_R)\theta(\lambda_R-\kappa_1)\times (1-\frac{\kappa_2}{\kappa_1}) ae^{ \theta_1(x)+\theta_2(x)-\theta_\lambda(x)+\theta_\lambda(x')}\kappa_1 }{\left( e^{\theta_1(x) }+ae^{\theta_2(x)}\right)\left( e^{\theta_1(x') }+ae^{\theta_2(x')}\right)}|\\


\le &C|  \frac {\theta(x_2'-x_2)\theta(\kappa_2-\lambda_R)\theta(\lambda_R-\kappa_1) e^{  \theta_2(x)-\theta_{\lambda_R}(x)}e^{\theta_{\lambda_R}(x')-\theta_1(x')} e^{\lambda_I^2(x_2-x_2')}  }{\left( 1+ae^{\theta_2(x)-\theta_1(x)}\right)\left( 1 +ae^{\theta_2(x')-\theta_1(x')}\right)}|\\
\le& C,
\ea
\eeq and 
\beq\label{E:discrete-residue-chi-asymp}
\ba{l}
   G_d(x,x', \lambda)   
\to   \left\{
{\begin{array}{lr}
0,  &\lambda\to\kappa_2^+,\\
+\theta(x_2'-x_2)\chi_2(x)\xi_2(x'), &\lambda\to\kappa_2^-,\\
-\theta(x_2'-x_2)\chi_1(x)\xi_1(x'),  &\lambda\to\kappa_1^+,\\
0, &\lambda\to\kappa_1^-,
\end{array}}
\right.\\
\corr{\kappa_j^+=\kappa_j+0^+e^{i\alpha},\ 
\kappa_j^-=\kappa_j+0^+e^{i(\pi+\alpha)},\ \ 0\le \alpha\le \pi,}\\
\corr{\lim_{|x|\to\infty}G_d(x,x',\lambda)\ast f(x')=0, \  \textit{ for $\lambda\ne \kappa_j$ fixed}},\\
\corr{\textit{and any Schwartz function $ f$}}. 

\ea
\eeq

$\underline{\emph{Step 2  (A decomposition for $G_c$)}}:$ From
 \eqref{E:sato}, \eqref{E:sato-adjoint}, \eqref{E:g-c},
\[
\ba{rl}
&  G_c (x,x', \lambda)\\
=&-\frac{e^{i[\lambda_I(x_1'-x_1)+2\lambda_I\lambda_R(x'_2-x_2)]}}{2\pi} \int_\RR ds\ { \mbox{sgn}(x_2-x_2')\theta((s^2-\lambda_I^2)(x_2-x_2')) }\\
&\times  e^{ { (s^2-\lambda^2_I)(x_2'-x_2)}- { is[  (x_1'-x_1)+2 \lambda_R  (x_2'-x_2)] }}\\
&\times(1-\frac{\kappa_1-\kappa_2}{{ is+\lambda_R-\kappa_2}}\frac 1{1+e^{-[(\kappa_1-\kappa_2)x_1+(\kappa_1^2-\kappa_2^2)x_2 -\ln a]}})\\
&\times(1+\frac{\kappa_1-\kappa_2}{{ is+\lambda_R-\kappa_1}}\frac 1{1+e^{-[(\kappa_1-\kappa_2)x'_1+(\kappa_1^2-\kappa_2^2)x'_2 -\ln a]}}).
\ea  
\] 
So if $\lambda\in D_{\kappa_1}^c\cap D_{\kappa_2}^c$, 
\[
\ba{rl}
 |G_c (x,x', \lambda)| 
\le &C\theta(x_2-x_2')(\int^{-|\lambda_I|}_{-\infty}+\int^\infty_{|\lambda_I|})   e^{   (s^2-\lambda^2_I)(x_2'-x_2)}ds\\
&+C\theta(x'_2-x_2)\int_{-|\lambda_I|}^ {|\lambda_I| }e^{   (s^2-\lambda^2_I)(x_2'-x_2)}ds,
\ea
\] and one can either look for estimates for special functions $\mbox{erfc}(z)$ and Dawson's integral or a direct estimate  (see $\underline{\emph{Step 1}}$ in \cite{Wu18}) to derive
\beq\label{E:gauss-green}
\ba{l}
| G_c(x,x',\lambda)|\le C (1+\frac 1{\sqrt{|x_2-x_2'|}} ),  \ \lambda\in D_{\kappa_1 }^c\cap D_{\kappa_2 }^c.
\ea
\eeq \corr{Furthermore, the dominated convergence theorem and Riemann-Lebesque lemma imply
\beq\label{E:asym-g-c}
\ba{l}
 \lim_{|x|\to\infty}G_c(x,x',\lambda)\ast f(x')=0, \  \textit{ for $\lambda\ne \kappa_j$ fixed},\\
 \textit{and any Schwartz function $ f$}. 
\ea
\eeq}Hence it remains to show the estimates for $\lambda\in D_{ \kappa_j}^\times$.  

  For $\lambda\in D_{\kappa_j}^\times$, decompose
\beq\label{E:green-kappa-0}
\ba{rl}
&G_c(x,x', \lambda)= -\frac{e^{i[\lambda_I(x_1'-x_1)+2\lambda_I\lambda_R(x'_2-x_2)]}}{2\pi}\left(I_j+II_j+III_j+IV_j   \right),\ea\eeq
with
\beq\label{E:green-kappa}
\ba{rl}
I_j=&\int_{-\corr{\kappa}}^\corr{\kappa}\mbox{sgn}(x_2-x_2')\theta((s^2-\lambda_I^2)(x_2-x_2'))\chi(x, \lambda_R+is) \xi(x',\lambda_R+is)\\
&\times  [e^{is[x_1-x_1'+2\lambda_R(x_2-x_2')]+(\lambda_I^2-s^2 )(x_2-x_2')}-1]  ds,\\
II_j=&\int_{-\corr{\kappa}}^\corr{\kappa}\mbox{sgn}(x_2-x_2')\theta((s^2-\lambda_I^2)(x_2-x_2')) \\
&\times[ \chi(x, \lambda_R+is) \xi(x',\lambda_R+is)
- \frac{\chi_j(x)\xi_j(x')} {\lambda_R+is-\kappa_j}
]ds,  \\
III_j=&\int_{-\corr{\kappa}}^\corr{\kappa}\mbox{sgn}(x_2-x_2')\theta((s^2-\lambda_I^2)(x_2-x_2'))\frac{\chi_j(x)\xi_j(x')} {\lambda_R+is-\kappa_j}ds, \\
IV_j=& (\int_{-\infty}^{-\corr{\kappa}}+\int_{\corr{\kappa}}^\infty)\mbox{sgn}(x_2-x_2')\theta((s^2-\lambda_I^2)(x_2-x_2')) \chi(x, \lambda_R+is)\\
&\times \xi(x',\lambda_R+is)e^{(s^2-\lambda^2_I)(x_2'-x_2)- is[  (x_1'-x_1)+2 \lambda_R  (x_2'-x_2)] }  ds. 
\ea
\eeq  

By the same method as that for \eqref{E:gauss-green}, we have
\beq\label{E:green-35}
\ba{c}
|II_j|<C,\ |IV_j| \le C (1+\frac1{\sqrt{|x_2-x_2'|}} ),\  \,\lambda\in D_{\kappa_j}.
\ea
\eeq 

$\underline{\emph{Step 3  (Estimates for $III_j$)}}:$ 
Since
\[
\ba{rl}
 &III_j \\
=&\theta(x_2-x_2')(\int_{-\corr{\kappa}}^{-|\lambda_I|}+\int_{|\lambda_I|}^{\corr{\kappa}})\frac{\chi_j(x) \xi_j(x')}{\lambda_R+is-\kappa_j}ds-\theta(x_2'-x_2)\int_{-|\lambda_I|}^{ |\lambda_I|}\frac{\chi_j(x) \xi_j(x')}{\lambda_R+is-\kappa_j}ds\\
=& -i\chi_j(x) \xi_j(x') \{\theta(x_2-x_2')  \int_{-\corr{\kappa}}^\corr{\kappa}  \frac{1}{s-i(\lambda_R-\kappa_j)}ds   -  \int_{-|\lambda_I|}^{|\lambda_I|}\frac{1}{s-i(\lambda_R-\kappa_j)}ds\}.
\ea
\]
 By logarithmic function, 
\beq\label{E:approach} 
\ba{rl}
& \lim_{\lambda\to\kappa _j}\int_{-\corr{\kappa}}^{\corr{\kappa}}\frac{1}{s-i(\lambda_R-\kappa_j)}ds=i\pi(2\theta( \lambda_R-\kappa_j)-1), \\ 
&\int_{-|\lambda_I|}^{|\lambda_I|}\frac{1}{s-i(\lambda_R-\kappa_j)}ds
 
= 
2 \pi i[\theta(\lambda_R-\kappa_j)-1] +2i\cot^{-1}\frac{\lambda_R- \kappa_j}{ |\lambda_I|}, \ \lambda\in D_{\kappa_j}^\times . 
\ea
\eeq  Here
\beq
\ba{l}
\cot^{-1}\frac {\lambda_R-\kappa_j}{|\lambda_{I}|}
= \left\{
{\begin{array}{lcl}
   \alpha,&0<\alpha\le\pi , & \lambda\in D_{\kappa_j}^\times,\\
2\pi-\alpha,&\pi\le\alpha<2\pi,& \lambda\in D_{\kappa_j}^\times,
\end{array}}
\right.  \ea \label{E:arccot}\eeq
As a result,
\beq\label{E:green-4-bdd}
\ba{c}
|III_j|\le C,\quad\lambda\in D_{\kappa_j}^\times,
\ea
\eeq and 
\beq\label{E:green-4-jump}
\ba{rl}
&-\frac{e^{i[\lambda_I(x_1'-x_1)+2\lambda_I\lambda_R(x'_2-x_2)]}}{2\pi} III _j\\
\to & [-1+\frac 12\theta(x_2-x_2')]\chi_j(x) \xi_j(x') +\theta(\lambda_R-\kappa_j) \theta(x_2'-x_2 )\chi_j(x) \xi_j(x') \\
&+\frac 1\pi \chi_j(x) \xi_j(x')\cot^{-1}\frac{\lambda_R- \kappa_j}{ |\lambda_I|},\ \ \textit{as $\lambda\to\kappa_j$.} 
\ea\eeq

$\underline{\emph{Step 4  (Estimates for $I_j$)}}:$ We follow the same method as that  in \cite{Wu18} to derive estimates for $I_j$. Setting $y_1=x_1-x_1'$, $y_2=x_2-x_2'$,   estimates for $I_j$ are reduced to 
\beq\label{E:out-in}
\ba{rl}
I_j^{in}(y_1,y_2,\lambda)=&-\theta(-y_2)\int_{-|\lambda_I|}^{|\lambda_I|}
 \frac{e^{(\lambda_I^2-s^2 )y_2+is(y_1+2\lambda_Ry_2)}-1}{s-i(\lambda_R-\kappa_j)}ds,\\
I_j^{out}(y_1,y_2,\lambda)=& \theta( y_2)(\int_{-\corr{\kappa}}^{-|\lambda_I|}+\int_{|\lambda_I|}^{\corr{\kappa}})
 \frac{e^{(\lambda_I^2-s^2 )y_2+is(y_1+2\lambda_Ry_2)}-1}{s-i(\lambda_R-\kappa_j)}ds

\ea
\eeq In this step, we study $  I_j^{in}$ by considering  cases
\beq\label{E:I-cases}
\ba{rl}
(1a)&(\lambda_R- \kappa_j)(y_1 +2\lambda_Ry_2)\ge 0,\  
 { \left|\, |\lambda_I|-|\lambda_R-\kappa_j  |\,\right|}\le \frac 12|\lambda_I|,\\
(1b)& (\lambda_R- \kappa_j)(y_1 +2\lambda_Ry_2)\ge 0, \  
 { \left|\, |\lambda_I|-|\lambda_R- \kappa_j  |\,\right|}\ge \frac 12|\lambda_I| ,\\
(1c)&(\lambda_R- \kappa_j)(y_1 +2\lambda_Ry_2)< 0.

\ea
\eeq 
 In Case (1a),  {$|\lambda_R-\kappa_j|\ge \frac {|\lambda_I|}2$}. So
\beq\label{E:I-in-a}
\ba{rl}
 | I_j^{in}|
\le &  \theta(-y_2)\int_{-|\lambda_I|}^{|\lambda_I|}| \frac{e^{(\lambda_I^2-s^2 )y_2+is (y_1 +2\lambda_Ry_2)}   -1}{ i(\lambda_R-\kappa_j)} | ds\\ \le & C  \int_{-|\lambda_I|}^{|\lambda_I|} \frac{1}{|\lambda_I|}  ds \le C.
\ea\eeq 

In Case (1b) or (1c), we deform the real interval $-|\lambda_I|\le s\le |\lambda_I|$ to the semicircle   $\Gamma$, defined by
\[
\ba{l} \Gamma\equiv\{\mathfrak s=se^{i\beta}\in\CC: s=|\lambda_I|,\,(y_1 +2\lambda_Ry_2)\sin\beta >0\}\subset \Omega, \\
\Omega\equiv \{\mathfrak s=se^{i\beta}\in\CC: 0\le s\le|\lambda_I|,\,(y_1 +2\lambda_Ry_2)\sin\beta >0\},
\ea\] and note that
\[
\ba{c}
\textit{$e^{(\lambda_I^2-s^2 )y_2+is (y_1  +2\lambda_Ry_2)}$ is uniformly bounded on the  half disk  $\Omega$},\\
\textit{$i(\lambda_R-\kappa_j)\in\Omega$ in Case (1b)}.
\ea\] 
Besides,   (1b) or (1c) implies
\[
\ba{c} {\mbox{Distance} \{\Gamma, i(\lambda_R-\kappa_j)\}\ge\frac {|\lambda_I|}2.}
\ea\]
Therefore, 
\beq\label{E:I-in-bc}
\ba{rl}
 | I_j^{in}|
= &|\theta(-y_2)\int_{-|\lambda_I|}^{|\lambda_I|} \frac{e^{(\lambda_I^2-s^2 )y_2+is (y_1  +2\lambda_Ry_2)}   -1}{s-i(\lambda_R-\kappa_j)}  ds|\\
\le &|\theta(-y_2)\int_{\Gamma} \frac{e^{(\lambda_I^2-s^2 )y_2+is (y_1  +2\lambda_Ry_2)}   -1}{s-i(\lambda_R-\kappa_j)}  ds|\\
&+C\delta((\lambda_R- \kappa_j)(y_1 +2\lambda_Ry_2))\theta(-y_2)\delta(|\lambda_I|-|\lambda_R- \kappa_j  |)\\
&\times  e^{(\lambda_I^2+(\lambda_R-\kappa_j)^2 )y_2-(\lambda_R-\kappa_j) (y_1  +2\lambda_Ry_2)}+C\\
\le &2| \int_{\Gamma} \frac{C}   {|\lambda_I|}  ds|+C\le C.
\ea\eeq
Hence estimates for $I_j^{in}$ follows from \eqref{E:I-in-a} and \eqref{E:I-in-bc}.

$\underline{\emph{Step 5  (Estimates for $I_j$ (continued))}}:$ We consider the following cases,
\[
\ba{rl}
(2a)&|\lambda_I|\sqrt{y_2}\ge 1,\\
(2b)&|\lambda_I|\sqrt{y_2}\le 1 . 
\ea
\]

  In case of (2a),  let $\xi=\frac s{|\lambda_I|}$,   
\beq\label{E:I-out-a}
\ba{rl}
 |   {I}_{j}^{out} |   
=  &C |   \theta(y_2)\{(\int_{-\frac {\corr{\kappa}}{|\lambda_I|}}^{-1}+\int_{1}^{\frac {\corr{\kappa}}{|\lambda_I|}})  \frac { e^{{\lambda_I^2y_2(1-\xi^2 )+i|\lambda_I|\xi(y_1+2\lambda_Ry_2)}}   }{ \xi -i\frac{\lambda_R- {\kappa_j}}{|\lambda_I|}} d\xi \\
 & -(\int_{-\corr{\kappa}}^{-|\lambda_I|}+\int_{|\lambda_I|}^{\corr{\kappa}})  \frac {  1}{ s-i(\lambda_R-  {\kappa}_j )} ds\}|\\ 
\le &C(\int_{-\frac {\corr{\kappa}}{|\lambda_I|}}^{-1}+\int_{1}^{\frac {\corr{\kappa}}{|\lambda_I|}})   e^{  1-\xi^2 }     d\xi+C\\
=&Ce\sqrt {e^{-1}-e^{-\corr{\frac { \kappa^2}{|\lambda_I|^2}}}}+C\le C.
\ea
\eeq 

  In case (2b),  let $s=\frac \omega{\sqrt {y_2}}$,
\beq\label{E:I-out-b}
\ba{rl}
  | {I}_{j}^{out} |  
= & C|   \theta(y_2)(\int_{-1}^{-|\lambda_I|\sqrt{y_2}}+\int_{|\lambda_I|\sqrt{y_2}}^{1}) \frac {e^{i\omega\frac {y_1+2\lambda_Ry_2}{\sqrt{y_2}}+\lambda_I^2y_2-\omega^2 } }{\omega-{ i(\lambda_R- {\kappa}_j)\sqrt{y_2}}}  d\omega  \\
&+ \theta(y_2)(\int_{-\corr{\kappa}\sqrt{y_2}}^{-1}+\int_{1}^{\corr{\kappa}\sqrt{y_2}})\frac {e^{i\omega\frac {y_1+2\lambda_Ry_2}{\sqrt{y_2}}+\lambda_I^2y_2-\omega^2 }}{\omega-{ i(\lambda_R- {\kappa}_j)\sqrt{y_2}}}  d\omega  \\
&\corr{-}  {  
 (\int_{-\corr{\kappa}}^{-|\lambda_I|}+\int_{|\lambda_I|}^{\corr{\kappa}})\frac {1}{s-i(\lambda_R-{\kappa}_j)} ds} | 
\\
\le &C\corr{(A_1+A_2+A_3)},
 \ea\eeq
where
\beq\label{E:I-out-b-A}
\ba{rl}
A_1=&|   \theta(y_2)\corr{(\int_{-1}^{-|\lambda_I|\sqrt{y_2}}+\int_{|\lambda_I|\sqrt{y_2}}^{1}) }\frac {e^{i\omega\frac {y_1+2\lambda_Ry_2}{\sqrt{y_2}}+\lambda_I^2y_2-\omega^2 }-e^{i\omega\frac {y_1+2\lambda_Ry_2}{\sqrt{y_2}} }}{\omega-{ i(\lambda_R- {\kappa}_j)\sqrt{y_2}}}  d\omega\\
&\corr{+(\int_{-1}^{-|\lambda_I|\sqrt{y_2}}+\int_{|\lambda_I|\sqrt{y_2}}^{1})  \frac { e^{i\omega\frac {y_1+2\lambda_Ry_2}{\sqrt{y_2}} }}{\omega-{ i(\lambda_R- {\kappa}_j)\sqrt{y_2}}}  d\omega) }|\\
\le &A_{11}+A_{12},\\
A_2=&|   \theta(y_2)(\int^{-1}_{-\corr{\kappa}\sqrt{y_2}}  +\int^{\corr{\kappa}\sqrt{y_2}}_{1} )\frac { \corr{ e^{i\omega\frac {y_1+2\lambda_Ry_2}{\sqrt{y_2}}+\lambda_I^2y_2-\omega^2 }}}{\omega-{ i(\lambda_R- {\kappa}_j)\sqrt{y_2}}}  d\omega) |,\\
\corr{A_3}=&|  (\int^{-|\lambda_I|}_{-\corr{\kappa}}+\int^ {\corr{\kappa}}_{|\lambda_I|})\frac {1}{s-i(\lambda_R-{\kappa}_j)} ds  |\le C.
\ea
\eeq \corr{ Since $|\lambda_I|\sqrt{y_2}\le 1$,
\[
\ba{rl}
 A_2\le &C (\int^{-1}_{-\kappa\sqrt{y_2}}  +\int^{\kappa\sqrt{y_2}}_{1} )  |  e^{i\omega\frac {y_1+2\lambda_Ry_2}{\sqrt{y_2}}+\lambda_I^2y_2-\omega^2 } |  d\omega 
\\
= &C (\int^{-1}_{-\kappa\sqrt{y_2}}  +\int^{\kappa\sqrt{y_2}}_{1} )    e^{ \lambda_I^2y_2-\omega^2 }   d\omega \\
\le &C (\int^{-1}_{-\kappa\sqrt{y_2}}  +\int^{\kappa\sqrt{y_2}}_{1} )    e^{  -\omega^2 }   d\omega \\
\le &C.
\ea\]}Applying the mean value theorem to $A_{11}$, 
\[
\ba{rl}
 A_{11}\equiv &|   \theta(y_2)\corr{(\int_{-1}^{-|\lambda_I|\sqrt{y_2}}+\int_{|\lambda_I|\sqrt{y_2}}^{1}) }\frac {e^{i\omega\frac {y_1+2\lambda_Ry_2}{\sqrt{y_2}}+\lambda_I^2y_2-\omega^2 }-e^{i\omega\frac {y_1+2\lambda_Ry_2}{\sqrt{y_2}} }}{\omega-{ i(\lambda_R- {\kappa}_j)\sqrt{y_2}}}  d\omega|
 \\ 
\le & \theta(y) (\int_{|\lambda_I|\sqrt{y}}^{1}   |\frac {e^{i\omega\frac { x+2\lambda_Ry}{\sqrt{y}}+\lambda_I^2y-\omega^2 }-e^{i\omega\frac { x+2\lambda_Ry}{\sqrt{y}}}}{ {\omega-|\lambda_I|\sqrt{y}}}  | d\omega \\
 &+ \int^{-|\lambda_I|\sqrt{y}}_{-1}   |\frac {e^{i\omega\frac { x+2\lambda_Ry}{\sqrt{y}}+\lambda_I^2y-\omega^2 }-e^{i\omega\frac { x+2\lambda_Ry}{\sqrt{y}}}}{ {\omega+|\lambda_I|\sqrt{y}}}  | d\omega )\\
 \le & \theta(y) (\int_{|\lambda_I|\sqrt{y}}^{1}   |\frac {e^{ \lambda_I^2y-\omega^2 }-1}{ {\omega-|\lambda_I|\sqrt{y}}}  | d\omega  +  \int^{-|\lambda_I|\sqrt{y}}_{-1}   |\frac {e^{ \lambda_I^2y-\omega^2 }-1}{ {\omega+|\lambda_I|\sqrt{y}}}  | d\omega) \\
 \le &C,
\ea
\]
\corr{and   the argument of (1a), (1b), (1c) in $\underline{\emph{Step 4}}$   to 
\corr{\[
\ba{c}
A_{12}=|(\int_{-1}^{-|\lambda_I|\sqrt{y_2}}+\int_{|\lambda_I|\sqrt{y_2}}^{1})  \frac { e^{i\omega\frac {y_1+2\lambda_Ry_2}{\sqrt{y_2}} }}{\omega-{ i(\lambda_R- {\kappa}_j)\sqrt{y_2}}}  d\omega   |,
\ea\]}one can derive uniform boundedness for $A_{12}$. 
Therefore, we have justify $|I_j^{out}|\le C$ in case (2b). }

Combining $\underline{\emph{Step 1}}$ to $\underline{\emph{Step 5}}$, we prove \eqref{E:eigen-green}.

\end{proof}

\begin{lemma}\label{L:discontinuities}
The Green function $ G$, defined by \eqref{E:sym-0}, satisfies the algebraic constraint
\beq\label{E:green-sym}
\ba{c}
G(x,x',\lambda)=\overline{G(x,x',\overline\lambda)}.
\ea
\eeq  Moreover, there exist $ {\mathfrak G}_j $, $\omega_j$, such that
\beq\label{E:g-asym-pm-i-prep-limit-sym}
\begin{array}{rl}
  &G(x,x', \lambda)\\
=  & \left\{
{\begin{array}{ll}
{\mathfrak G}_1(x,x')+\frac 1\pi 
\chi_1(x)\xi_1(x')\cot^{-1}\frac{\lambda_R-\kappa_1}{|\lambda_I|}+\omega_1(x,x', \lambda) , &\lambda\in D_{ \kappa_1}^\times, \\
{\mathfrak G}_2(x,x')-\frac 1\pi \chi_2(x)\xi_2(x')\cot^{-1}\frac{\kappa_2-\lambda_R}{|\lambda_I|} +\omega_2(x,x', \lambda) , &\lambda\in D_{ \kappa_2}^\times,
\end{array}}
\right.   
   
\end{array}
\eeq  with $\cot^{-1}\frac{\lambda_R-\kappa_1}{|\lambda_I|}$, $\cot^{-1}\frac{\kappa_2-\lambda_R}{|\lambda_I|}$ defined by \eqref{E:arccot},
\begin{gather}
\ba{c}
|{\mathfrak G}_j  |_{L^\infty(D_{ {\kappa}_j})},\  |\omega_j  |_{L^\infty(D_{ {\kappa}_j})}\le C (1+\frac1{\sqrt{|x_2-x_2'|}} ),\\ 
 |\frac {\omega_j (x,x',\lambda)}{\lambda-\kappa_j} |_{L^\infty(D_{ {\kappa}_j})}\le C(1+\frac1{\sqrt{|x_2-x_2'|}} )(1+|x' -x |) ,
\ea \label{E:green-variation-derivative} 
\end{gather} 
and  the symmetry
\begin{equation}\label{E:green-exp-1-g}
\ba{c}
 {\mathfrak G}_2(x,x')e^{ \kappa_2(x_1-x'_1)+  \kappa_2^2(x_2-x'_2)}=e^{ \kappa_1( x _1-x _1')+ \kappa_1^2(x _2-x_2')}{\mathfrak G}_1(x,x'),\\
  G(x,x' , \kappa_2+0^+e^{i\alpha})e^{ \kappa_2(x_1-x'_1)+  \kappa_2^2(x_2-x'_2)}\\=e^{ \kappa_1( x _1-x _1')+ \kappa_1^2(x _2-x_2')} G(x,x' ,\kappa_1+0^+e^{i(\pi+\alpha)}) .  
\ea
\end{equation}  
\end{lemma}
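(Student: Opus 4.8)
The plan is to extract all three claims from the representation $G=G_d+G_c$ already assembled in the proof of Proposition \ref{P:eigen-green}, isolating the single genuinely discontinuous contribution and absorbing everything else into controlled pieces. For the reality relation \eqref{E:green-sym} I would argue on $\mathcal G$ rather than $G$. By Lemma \ref{L:green-heat} both $\mathcal G_c$ and $\mathcal G_d$ depend on $\lambda$ only through $\lambda_R$ and $\lambda_I^2$: the cutoff $\theta((s^2-\lambda_I^2)(x_2-x_2'))$ is even in $\lambda_I$, while $\theta(\lambda_R-\kappa_j)$ and $\varphi(x,\lambda_R+is)\psi(x',\lambda_R+is)$ involve only $\lambda_R$; hence $\mathcal G(x,x',\lambda)=\mathcal G(x,x',\overline\lambda)$. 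Because $\chi(x,\cdot)$ and $\xi(x',\cdot)$ are rational with real coefficients, the substitution $s\mapsto-s$ conjugates the integrand of $\mathcal G_c$ while fixing the even cutoff, so $\mathcal G_c$ is real, and $\mathcal G_d$ is real by inspection. Combining the reality of $\mathcal G$ with $\mathcal G(\lambda)=\mathcal G(\overline\lambda)$, the identity $G=e^{-\lambda(x_1-x_1')-\lambda^2(x_2-x_2')}\mathcal G$, and $x\in\RR^2$ then gives $\overline{G(x,x',\overline\lambda)}=G(x,x',\lambda)$.

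For the decomposition \eqref{E:g-asym-pm-i-prep-limit-sym} I would work in $D_{\kappa_j}^\times$ with $G_c=-\tfrac{1}{2\pi}e^{i[\cdots]}(I_j+II_j+III_j+IV_j)$. As $\lambda\to\kappa_j$ the pieces $I_j,II_j,IV_j$ and the non-$\cot^{-1}$ part of $III_j$ all have limits \emph{independent of the polar angle}: the subtracted pole $\chi_j\xi_j/(\lambda_R+is-\kappa_j)$ in $II_j$ and the factor $e^{\cdots}-1$ in $I_j$ remove the $s=0$ singularity, and $IV_j$ is supported away from it; moreover the $\theta(\lambda_R-\kappa_j)$ jump generated by $III_j$ in \eqref{E:green-4-jump} combines with the $G_d$ jump recorded in \eqref{E:discrete-residue-chi-asymp} into a single angle-independent term. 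I would therefore define $\mathfrak G_j$ to be the resulting ($\alpha$-independent) limit and set $\omega_j:=G-\mathfrak G_j\mp\tfrac1\pi\chi_j\xi_j\cot^{-1}(\cdots)$; for $j=2$ I first rewrite $\cot^{-1}\tfrac{\lambda_R-\kappa_2}{|\lambda_I|}=\pi-\cot^{-1}\tfrac{\kappa_2-\lambda_R}{|\lambda_I|}$, the constant being absorbed into $\mathfrak G_2$, which produces the opposite sign in \eqref{E:g-asym-pm-i-prep-limit-sym}. The bounds $|\mathfrak G_j|,|\omega_j|\le C(1+|x_2-x_2'|^{-1/2})$ are then inherited from \eqref{E:green-35}, \eqref{E:green-4-bdd} and Steps 4--5 of the proposition, together with the boundedness of $\cot^{-1}$.

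The principal obstacle is the refined bound on $\omega_j/(\lambda-\kappa_j)$ in \eqref{E:green-variation-derivative}, i.e. showing that $\omega_j$ vanishes to first order at $\kappa_j$ with the weight $(1+|x-x'|)$. I would prove this by writing $\omega_j$ as the increment of the regular pieces between $\lambda$ and $\kappa_j$ and estimating each increment by the very same contour-deformation argument (as in \eqref{E:I-in-bc}) and rescaling arguments (as in \eqref{E:I-out-a}, \eqref{E:I-out-b}) used for boundedness, now applied to the $\lambda$-differentiated integrands. Each derivative landing on an exponential $e^{is[(x_1-x_1')+2\lambda_R(x_2-x_2')]}$ or on the prefactor $e^{i[\lambda_I(x_1'-x_1)+2\lambda_I\lambda_R(x_2'-x_2)]}$ produces exactly one factor of $x-x'$, which is the source of the $(1+|x-x'|)$ weight, while the Gaussian $\int e^{-s^2(x_2'-x_2)}ds$ continues to supply the $|x_2-x_2'|^{-1/2}$; maintaining the cancellation of the $s=0$ singularity after differentiation is the delicate point.

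Finally, for \eqref{E:green-exp-1-g} I would first observe that the second (full $G$) identity is equivalent to the first ($\mathfrak G$) identity. Inserting the decomposition into both sides and noting that at $\kappa_2+0^+e^{i\alpha}$ and $\kappa_1+0^+e^{i(\pi+\alpha)}$ the two inverse cotangents both equal $\pi-\alpha$ by \eqref{E:arccot}, the coefficients of $\pi-\alpha$ match precisely because Lemma \ref{L:reside} gives $e^{\kappa_1(x_1-x_1')+\kappa_1^2(x_2-x_2')}\chi_1\xi_1+e^{\kappa_2(x_1-x_1')+\kappa_2^2(x_2-x_2')}\chi_2\xi_2=0$, and the remaining angle-free parts are exactly the first identity. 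To prove the latter I would split $\mathfrak G_j=\mathfrak G_j^d+\mathfrak G_j^c$. The discrete part obeys the symmetry by the same use of Lemma \ref{L:reside}, the limits of \eqref{E:discrete-residue-chi-asymp} at $\kappa_2^{\mp}$ pairing with those at $\kappa_1^{\pm}$. For the continuous part I would insert the partial-fraction identity $\chi(x,\lambda)\xi(x',\lambda)=1+\sum_j\chi_j\xi_j/(\lambda-\kappa_j)$ read off from \eqref{E:green-c} into the limiting integrals and compare the $\lambda_R\to\kappa_2$ and $\lambda_R\to\kappa_1$ expressions term by term, the residue at the pole not being approached passing from one side to the other again through Lemma \ref{L:reside}.
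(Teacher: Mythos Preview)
Your treatment of \eqref{E:green-sym} and of the decomposition \eqref{E:g-asym-pm-i-prep-limit-sym} with the estimates \eqref{E:green-variation-derivative} is essentially the paper's own argument: the reality via $s\mapsto-s$ on $\mathcal G$, the isolation of the $\cot^{-1}$ piece from $III_j$ combined with the $G_d$ jump, and the claim that the refined $\omega_j/(\lambda-\kappa_j)$ bound comes from rerunning the Steps~4--5 machinery on the incremented (differentiated) integrands are exactly what the paper does, and the paper likewise omits the detailed bookkeeping for \eqref{E:green-variation-derivative}, referring to \cite{Wu18}.

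Where you diverge is in the proof of the symmetry \eqref{E:green-exp-1-g}. You first establish the equivalence of the two identities via the $\cot^{-1}$ computation and Lemma~\ref{L:reside}, and then attack the $\mathfrak G_j$ identity by splitting $\mathfrak G_j=\mathfrak G_j^d+\mathfrak G_j^c$ and inserting the partial fraction $\chi\xi=1+\sum_j\chi_j\xi_j/(\lambda-\kappa_j)$ into the limiting integrals. The paper instead goes the other way: from \eqref{E:arccot} it reduces everything to proving the single identity $\mathcal G(x,x',\kappa_2+0^+e^{i\alpha})=\mathcal G(x,x',\kappa_1+0^+e^{i(\pi+\alpha)})$ on the \emph{unnormalized} Green function, and establishes that identity by a clean contour deformation (shifting $\lambda_R$ from $\lambda_{2,R}$ through $\tfrac{\kappa_1+\kappa_2}{2}$ to $\lambda_{1,R}$ in the $\int_{\RR}$ piece, picking up the $\kappa_j$ residues and cancelling them via Lemma~\ref{L:reside}) together with a direct evaluation of the $\int_{-|\lambda_I|}^{|\lambda_I|}$ piece using \eqref{E:approach}. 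Your route is workable but requires tracking the various angle-independent limits $I_j,II_j,IV_j$ and the non-$\cot^{-1}$ part of $III_j$ separately at both $\kappa_1$ and $\kappa_2$; the paper's route avoids ever unpacking $\mathfrak G_j$ explicitly and handles the whole symmetry in one stroke on $\mathcal G$.
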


\begin{proof} $\underline{\emph{Step 1 (Proof for \eqref{E:green-sym})}}:$ First of all, applying \eqref{E:sato}, \eqref{E:sato-adjoint}, \eqref{E:residue-eigenfunction}, Lemma \ref{L:green-heat}, and by a change of variables $s\mapsto -s$, one can prove the algebraic constraint \eqref{E:green-sym}.

$\underline{\emph{Step 2 (Proof for \eqref{E:g-asym-pm-i-prep-limit-sym})}}:$ For fixed $x$, $x'$, asymptotic \eqref{E:g-asym-pm-i-prep-limit-sym} can be obtained via the 
dominated convergence theorem, \eqref{E:green-kappa}, \eqref{E:discrete-residue-chi-asymp},  \eqref{E:green-4-jump}, estimates of (2b) in $\underline{\emph{Step 5}} $ of Proposition \ref{P:eigen-green}, and definition \eqref{E:arccot}. Moreover,   the error estimate \eqref{E:green-variation-derivative} follows from a similar argument (more elaborating) as that for deriving \eqref{E:eigen-green}. We omit the details for simplicity and refer \cite[Lemma 3.1]{Wu18} for a similar detailed proof.

$\underline{\emph{Step 3  (Proof of \eqref{E:green-exp-1-g})}}:$ From \eqref{E:arccot}, it suffices to establish 
\beq\label{E:G-sym}
\ba{c}\mathcal G(x,x' , \kappa_2+0^+e^{i\alpha})=\mathcal G(x,x' ,\kappa_1+0^+e^{i(\pi+\alpha)}) .
\ea
\eeq We now exploit the approach in \cite[Proposition 9 (i)] {VA04} to prove \eqref{E:G-sym}. For fixed  $x \ne 0$, $0<\alpha\le 2\pi$,   let 
\beq\label{E:G-sym-d}
\ba{l}
\lambda_2=\lambda_{2,R}+i\lambda_{2,I}=\kappa_2+0^+e^{i\alpha},\\
\lambda_1=\lambda_{1,R}+i\lambda_{1,I}=\kappa_1+0^+e^{i(\pi+\alpha)}. \ea
\eeq Then from Lemma \ref{L:reside} and Lemma \ref{L:green-heat}, immediately, one has
\beq\label{E:G-sym-d-0} 
\ba{rl}
(i) &\textit{for } 0<\alpha<\frac \pi 2 \textit{ or }\, \frac{3\pi}2<\alpha<2\pi,\\
&\mathcal G_d(x,x' , \lambda_2)=\mathcal G_d(x,x' ,\lambda_1)=0; \\ 
(ii) &\textit{for   $\frac \pi 2<\alpha<\frac{3\pi}2$},\\
&\mathcal G_d(x,x' ,\lambda_2)=\mathcal G_d(x,x' ,\lambda_1)= -\theta(x_2'-x_2)\varphi_1(x)\psi_1(x') .
\ea
\eeq On the other hand,
\beq\label{E:G-sym-+}
\ba{rl}
\corr{(-2\pi)} \mathcal G_c(x,x',\lambda_2)    
= &\theta(x_2-x_2') \int_{\mathbb R} \varphi (x,is+ \lambda_{2,R})\psi (x',is+ \lambda_{2,R})ds\\
& -  \int_{-|\lambda_{2,I}|}^{|\lambda_{2,I}|} \varphi (x,is+ \lambda_{2,R})\psi (x',is+ \lambda_{2,R})ds;\\
\corr{(-2\pi)} \mathcal G_c(x,x',\lambda_1)    
= &\theta(x_2-x_2') \int_{\mathbb R}\varphi (x,is+ \lambda_{1,R})\psi (x',is+ \lambda_{1,R})ds \\
& -  \int_{-|\lambda_{1,I}|}^{|\lambda_{1,I }|} \varphi (x,is+ \lambda_{1,R})\psi (x',is+ \lambda_{1,R})ds.
\ea
\eeq Deforming the contour, applying the residue theorem and Lemma \ref{L:reside}, 
\beq\label{E:G-sym-+-1}
\ba{rl}
&\theta(x_2-x_2') \int_{\mathbb R} \varphi (x,is+ \lambda_{2,R})\psi (x',is+ \lambda_{2,R})ds\\
=&\theta(x_2-x_2') \int_{\RR} \varphi (x,is+\frac{\kappa_1 +\kappa_2}2)\psi (x',is+\frac{\kappa_1 +\kappa_2}2)ds\\
&+(-2\pi)\theta(x_2-x_2') [1- \theta(   \kappa_2-\lambda_{2,R} ) ] \varphi_2(x )\psi_2(x' )\\
=&\theta(x_2-x_2') \int_{\RR} \varphi (x,is+\frac{\kappa_1 +\kappa_2}2 )\psi (x',is+\frac{\kappa_1 +\kappa_2}2 )ds\\
&-(-2\pi)\theta(x_2-x_2') [1-  \theta( \lambda_{1,R}-\kappa_1) ] \varphi_1(x )\psi_1(x' )\\
=&\theta(x_2-x_2') \int_{\mathbb R} \varphi (x,is+ \lambda_{1,R})\psi (x',is+ \lambda_{1,R})ds.
\ea
\eeq
 On the other hand, the residue theorem,  Lemma \ref{L:reside}, \eqref{E:arccot},    \eqref{E:approach}, \eqref{E:G-sym-d}, and the dominated convergence theorem imply
\beq\label{E:G-sym-+-2}
\ba{rl}
&\int_{-|\lambda_{2,I}|}^{|\lambda_{2,I}|} \varphi (x,is+ \lambda_{2,R})\psi (x',is+ \lambda_{2,R})ds\\
=&+\corr{(- i )}\varphi_2(x )\psi_2(x' )\int_{-|\lambda_{2,I}|}^{|\lambda_{2,I}|}  \frac{1}{s-i( \lambda_{2,R}-\kappa_2)}d\eta\\
=&+\corr{(- i )}\varphi_1(x )\psi_1(x' )\int_{-|\lambda_{1,I}|}^{|\lambda_{1,I }|} \frac{1}{s-i( \lambda_{1,R}-\kappa_1)}d\eta\\
=&\int_{-|\lambda_{1,I}|}^{|\lambda_{1,I }|} \varphi (x,is+ \lambda_{1,R})\psi (x',is+ \lambda_{1,R})ds.
\ea
\eeq
Consequently,  \eqref{E:G-sym}  follows  from \eqref{E:G-sym-d}-\eqref{E:G-sym-+-2}.
\end{proof}

\begin{lemma} \label{L:cont-debar} \cite{BP214}
For $\lambda_I\ne 0$,
\[
\ba{c}
 \partial_{\bar\lambda}  G(x,x' ,\lambda)=
  
 -\frac {\mbox{sgn}(\lambda_I)}{2\pi i}e^{(\overline\lambda-\lambda)(x_1-x_1')+(\overline\lambda ^2-\lambda^2)(x_2-x_2')} \chi(x,\overline\lambda )\xi(x',\overline\lambda).
\ea
\] 

\end{lemma}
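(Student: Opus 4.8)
The plan is to reduce the statement to the heat-type Green function $\mathcal G$ and to differentiate its explicit representation from Lemma \ref{L:green-heat} directly. Write $y_1=x_1-x_1'$, $y_2=x_2-x_2'$ and set $F(\zeta)=\varphi(x,\zeta)\psi(x',\zeta)=e^{\zeta y_1+\zeta^2y_2}\chi(x,\zeta)\xi(x',\zeta)$. Because the prefactor in $\mathcal G=e^{\lambda y_1+\lambda^2 y_2}G$ of \eqref{E:sym-0} is holomorphic in $\lambda$, one has $\partial_{\bar\lambda}G=e^{-\lambda y_1-\lambda^2y_2}\,\partial_{\bar\lambda}\mathcal G$, so it suffices to prove $\partial_{\bar\lambda}\mathcal G=-\tfrac{\mathrm{sgn}(\lambda_I)}{2\pi i}F(\bar\lambda)$. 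Fix $\lambda$ with $\lambda_I\neq0$ and $\lambda_R\neq\kappa_1,\kappa_2$, so that the vertical line $\mathrm{Re}\,\zeta=\lambda_R$ avoids the only poles $\kappa_1,\kappa_2$ of $F$ (the pole of $\chi$ at $0$ being cancelled by the zero of $\xi$ there). In this region $\mathcal G_d$ is locally constant in $\lambda$, hence $\partial_{\bar\lambda}\mathcal G_d=0$ and $\partial_{\bar\lambda}\mathcal G=\partial_{\bar\lambda}\mathcal G_c$.

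Next I would resolve the cutoff $\theta((s^2-\lambda_I^2)y_2)$ according to the sign of $y_2$: for $y_2>0$ it reduces $\mathcal G_c$ to $-\tfrac1{2\pi}\int_{|s|>|\lambda_I|}F(\lambda_R+is)\,ds$, and for $y_2<0$ to $+\tfrac1{2\pi}\int_{-|\lambda_I|}^{|\lambda_I|}F(\lambda_R+is)\,ds$. Using $\partial_{\bar\lambda}=\tfrac12(\partial_{\lambda_R}+i\partial_{\lambda_I})$, the key is the holomorphy identity $\partial_{\lambda_R}F(\lambda_R+is)=-i\,\partial_sF(\lambda_R+is)$. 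It turns $\partial_{\lambda_R}\mathcal G_c$, via the fundamental theorem of calculus, into boundary values of $F$ at $s=\pm|\lambda_I|$ and at $s=\pm\infty$, while $\partial_{\lambda_I}\mathcal G_c$ produces boundary values at the moving endpoints $\pm|\lambda_I|$ by the Leibniz rule. For $y_2>0$ the terms at $s=\pm\infty$ vanish because $\mathrm{Re}(\zeta^2)=\lambda_R^2-s^2\to-\infty$ forces $e^{\zeta^2y_2}\to0$ while $\chi,\xi$ stay bounded; for $y_2<0$ the interval is already finite. A pleasant feature, which I would verify explicitly, is that both signs of $y_2$ yield the same expressions $\partial_{\lambda_R}\mathcal G_c=\tfrac{i}{2\pi}\bigl[F(\lambda_R-i|\lambda_I|)-F(\lambda_R+i|\lambda_I|)\bigr]$ and $\partial_{\lambda_I}\mathcal G_c=\tfrac{\mathrm{sgn}(\lambda_I)}{2\pi}\bigl[F(\lambda_R+i|\lambda_I|)+F(\lambda_R-i|\lambda_I|)\bigr]$, so that the $\mathrm{sgn}(y_2)$ prefactor drops out, consistent with the $y_2$-free statement.

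Forming $\tfrac12\partial_{\lambda_R}\mathcal G_c+\tfrac{i}{2}\partial_{\lambda_I}\mathcal G_c$, the value of $F$ at $\lambda_R-i|\lambda_I|$ acquires coefficient $\tfrac{i}{4\pi}(1+\mathrm{sgn}(\lambda_I))$ and the value at $\lambda_R+i|\lambda_I|$ coefficient $\tfrac{i}{4\pi}(\mathrm{sgn}(\lambda_I)-1)$; exactly one of these survives, and it is always the endpoint equal to $\bar\lambda$ (namely $\lambda_R-i|\lambda_I|$ when $\lambda_I>0$ and $\lambda_R+i|\lambda_I|$ when $\lambda_I<0$), with coefficient $\tfrac{i\,\mathrm{sgn}(\lambda_I)}{2\pi}=-\tfrac{\mathrm{sgn}(\lambda_I)}{2\pi i}$. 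Hence $\partial_{\bar\lambda}\mathcal G=-\tfrac{\mathrm{sgn}(\lambda_I)}{2\pi i}F(\bar\lambda)$; multiplying by $e^{-\lambda y_1-\lambda^2 y_2}$ and substituting $F(\bar\lambda)=e^{\bar\lambda y_1+\bar\lambda^2y_2}\chi(x,\bar\lambda)\xi(x',\bar\lambda)$ yields the claimed identity. I expect the main obstacle to be precisely this sign bookkeeping — correctly pairing $\mathrm{sgn}(\lambda_I)$, the orientation of the endpoints $\pm|\lambda_I|$, and the cutoff so that both cases collapse to a single formula evaluated at $\bar\lambda$ — together with the routine justification of differentiation under the integral sign, which for $y_2>0$ rests on the uniform Gaussian decay already exploited in Proposition \ref{P:eigen-green}. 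The jump contributions from $\mathcal G_d$ and from the contour crossing the real poles on the lines $\lambda_R=\kappa_j$ are deliberately excluded here; they are responsible for the discontinuities recorded in Lemma \ref{L:discontinuities} rather than for the $\bar\partial$-data.
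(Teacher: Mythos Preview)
Your argument is correct and is a genuinely different route from the paper's. You restrict to $\lambda_R\neq\kappa_j$, where $\mathcal G_d$ is locally constant, and then exploit the Cauchy--Riemann identity $\partial_{\lambda_R}F(\lambda_R+is)=-i\,\partial_sF(\lambda_R+is)$ to turn $\partial_{\lambda_R}\mathcal G_c$ into pure boundary terms that combine cleanly with the Leibniz boundary terms from $\partial_{\lambda_I}\mathcal G_c$. The paper instead keeps the variable $\lambda'=s-\lambda_I$, differentiates under the integral, and lets $\partial_{\bar\lambda}$ hit the rational factor $\chi\xi$: this produces delta terms $\delta_{\lambda_R=\kappa_j}$ from the poles of $\xi$, which are then shown to cancel exactly against $\partial_{\bar\lambda}G_d=-\tfrac12 e^{-\lambda y_1-\lambda^2y_2}\theta(-y_2)\varphi_j(x)\psi_j(x')\,\delta_{\lambda_R=\kappa_j}$. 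Your approach is cleaner and more elementary pointwise; the paper's buys the additional information that $\partial_{\bar\lambda}G$ carries no singular measure supported on the lines $\lambda_R=\kappa_j$, i.e.\ that the separate jumps of $G_c$ and $G_d$ across those lines cancel in $G$. You can recover that from your computation by continuity of both sides for $\lambda_I\neq0$, but your closing remark slightly misattributes this cancellation: Lemma~\ref{L:discontinuities} concerns the \emph{point} singularities at $\lambda=\kappa_j$, not the lines $\lambda_R=\kappa_j$; the line contributions simply cancel between $G_c$ and $G_d$ and leave no trace in either the $\bar\partial$-data or the discontinuity structure.
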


\begin{proof} \corr{ Using the change of variables $\lambda+i\lambda'=\lambda_R+is$ in \eqref{E:green-heat}},  
\[
\ba{rl}
&\partial_{\overline\lambda}  G _c(x,x', \lambda)\\
=&-\frac {\theta(x_2-x'_2)}{2\pi}\\
&\times\left\{
{\begin{array}{l }
\partial_{\overline\lambda}\{[\int_{-\infty}^{-2\lambda_I }+\int_{0}^\infty]\chi(x,\lambda+i\lambda')\xi(x',\lambda+i\lambda')   \\
 \times  e^{[(\lambda+i\lambda')-\lambda](x_1-x_1')+[(\lambda+i\lambda')^2-\lambda^2](x_2-x_2')} d\lambda'\}\\
 +\frac {\theta(-(x_2-x'_2))}{2\pi}\partial_{\overline\lambda}\{ \int_{-2\lambda_I }^{0}\chi(x,\lambda+i\lambda')\xi(x',\lambda+i\lambda')\\
 \times  e^{[(\lambda+i\lambda')-\lambda](x_1-x_1')+[(\lambda+i\lambda')^2-\lambda^2](x_2-x_2')}d\lambda'\} ,\quad\quad
\mbox{if $\lambda_I>0$, }\\
\partial_{\overline\lambda}\{[\int_{-\infty}^0+\int_{-2\lambda_I }^\infty] \chi(x,\lambda+i\lambda')\xi(x',\lambda+i\lambda')\\
 \times  e^{[(\lambda+i\lambda')-\lambda](x_1-x_1')+[(\lambda+i\lambda')^2-\lambda^2](x_2-x_2')} d\lambda'\}\\
 +\frac {\theta(-(x_2-x'_2))}{2\pi}\partial_{\overline\lambda}\{ \int^{-2\lambda_I }_{0}\chi(x,\lambda+i\lambda')\xi(x',\lambda+i\lambda')\\
 \times  e^{[(\lambda+i\lambda')-\lambda](x_1-x_1')+[(\lambda+i\lambda')^2-\lambda^2](x_2-x_2')}d\lambda'\}
,\quad \quad\mbox{if $\lambda_I<0$  }  
\end{array}}
\right.
\ea\]
\[\ba{rl}
=&-\frac {\mbox{sgn}(\lambda_I)}{2\pi i}\delta_{\lambda'=-2\lambda_I}e^{[(\lambda+i\lambda')-\lambda](x_1-x_1')+[(\lambda+i\lambda')^2-\lambda^2](x_2-x_2')} \\
&\times\chi(x,\lambda+i\lambda')\xi(x',\lambda+i\lambda')\\
&-\frac 1{2\pi}\int_\RR e^{[(\lambda+i\lambda')-\lambda](x_1-x_1')+[(\lambda+i\lambda')^2-\lambda^2](x_2-x_2')}\Xi(x_2-x'_2,\lambda,\lambda_I+\lambda')\\ 
&\times\partial_{\overline\lambda}[\chi(x,\lambda+i\lambda')\xi(x',\lambda+i\lambda')]d\lambda' \ea\]
\corr{Here
\beq\label{E:f}
\ba{rl}
&\Xi(x_2,\lambda,s)=\theta(x_2)\chi_{\{-\infty,-|\lambda_I|\}\cup \{|\lambda_I|,\infty\}}(s)-\theta(-x_2)\chi_{\{-|\lambda_I|,|\lambda_I|\}}(s),\\
&\textit{$\chi_A(s)$ denotes the characteristic function of the set $A$.}
\end{array}
\eeq} Using
$
\frac 1{\pi}\partial_{\bar \lambda}\left(\frac 1{\lambda-a}\right)=\delta_{\lambda_R=a_R}\delta_{\lambda_I=a_I}$,  one has
\[
\ba{rl}
 &\partial_{\overline\lambda}[\chi(x,\lambda+i\lambda')\xi(x',\lambda+i\lambda')] 
\\= &
  \corr{\pi \chi_1(x)\xi_1(x')\delta_{\lambda_R=\kappa_1}\delta_{\lambda'=-\lambda_I}  
 + \pi \chi_2(x)\xi_2(x')\delta_{\lambda_R=\kappa_2}\delta_{\lambda'=-\lambda_I}.}
\ea
\]So 
\beq\label{E:dbar-g-c}
\ba{rl}
&\partial_{\overline\lambda}  G _c(x,x', \lambda)\\
=&-\frac {\mbox{sgn}(\lambda_I)}{2\pi i}e^{[\overline\lambda-\lambda](x_1-x_1')+[\overline\lambda ^2-\lambda^2](x_2-x_2')} \chi(x,\overline\lambda )\xi(x',\overline\lambda)\\
&\corr{+\frac 1{2 } \theta(x'_2-x_2)  
   e^{-i\lambda_I(x_1-x_1')+(-2i\kappa_1\lambda_I+\lambda_I^2)(x_2-x_2')}\chi_1(x)\xi_1(x')\delta_{\lambda_R=\kappa_1}} \\
&\corr{+ \frac 1{2 } \theta(x'_2-x_2) e^{-i\lambda_I(x_1-x_1')+(-2i\kappa_2\lambda_I+\lambda_I^2)(x_2-x_2')}\chi_2(x)\xi_2(x')\delta_{\lambda_R=\kappa_2}} \\
=&-\frac {\mbox{sgn}(\lambda_I)}{2\pi i}e^{[\overline\lambda-\lambda](x_1-x_1')+[\overline\lambda ^2-\lambda^2](x_2-x_2')} \chi(x,\overline\lambda )\xi(x',\overline\lambda)\\
&+\frac 1{2 } \theta(x'_2-x_2)   e^{-\lambda(x_1-x_1') -\lambda ^2 (x_2-x_2')} \varphi_1(x)\psi_1(x')\delta_{\lambda_R=\kappa_1}\\
&+\frac 1{2 } \theta(x'_2-x_2) e^{-\lambda(x_1-x_1') -\lambda ^2 (x_2-x_2')} \varphi_2(x)\psi_2(x')\delta_{\lambda_R=\kappa_2}.
\ea
\eeq 
 
 Besides, a direct computation yields
\beq\label{E:dbar-g-d}
\ba{rl}
&\partial_{\overline\lambda}  G _d(x,x', \lambda)\\
=&-\partial_{\overline\lambda}[e^{ -\lambda (x_1-x_1')-\lambda^2 (x_2-x_2') } 
\theta(x_2'-x_2)\\
&\times\{\theta(\lambda_R-\kappa_1)\varphi_1(x)\psi_1(x')+\theta(\lambda_R-\kappa_2)\varphi_2(x)\psi_2(x')\}] \\
=&-\frac {e^{ -\lambda (x_1-x_1')-\lambda^2 (x_2-x_2') }}2 
\theta(x_2'-x_2)\varphi_1(x)\psi_1(x')\delta_{\lambda_R=\kappa_1}\\
& -\frac {e^{ -\lambda (x_1-x_1')-\lambda^2 (x_2-x_2') } }2
\theta(x_2'-x_2)
\varphi_2(x)\psi_2(x')\delta_{\lambda_R=\kappa_2}.

\ea
\eeq

 \corr{Combining \eqref{E:dbar-g-c} and \eqref{E:dbar-g-d}, we prove the lemma.}
\end{proof}

\section{The eigenfunction  and spectral transformation}\label{S:oblique-eigenfunction}
 
\corr{Based on the characterization of the Green function $G$, we can provide the $\overline\partial$ data of $m$ in Theorem \ref{T:KP-eigen-existence} and \ref{T:sd-continuous}.}
\begin{theorem} \label{T:KP-eigen-existence} If $\partial_x^kv_0 \in  {L^1\cap L^\infty}$, $   |k|\le 2 $, $|v_0|_{{L^1\cap L^\infty}}\ll 1$, $v_0(x)\in\RR$, then for fixed $\lambda\in \CC \backslash\{ 0,\kappa_1,\kappa_2\}$, there is a   unique solution  $m(x, \lambda)$ to the spectral equation
\beq\label{E:direct-spectral}
\ba{c}
Lm(x,\lambda)
=- v_0(x)m(x,\lambda),\\
\lim_{|x|\to\infty}  (   m(x,\lambda)-\chi(x,\lambda))=0, 

\ea
\eeq where the spectral operator $L$ and $\chi$ are defined by  
 \eqref{E:sato} and \eqref{E:sym-0}. 
 
Moreover, $m(x,\lambda)=\overline{m(x,\overline\lambda)}$, and for fixed $x\in\RR^2$,  $m(x,\lambda)$ satisfies
\begin{gather}
\ba{c}
{ |(1- E_{0 }  )  m(x, \lambda)|\le C}|v_0|_{L^1\cap L^\infty};
\ea\label{E:bdd}\\
\ba{c}
m(x,\lambda)=  { \frac{  m_{res}(x )}{\lambda } } +  m_{0,r}(x, \lambda),
\ \lambda\in D^ \times_{0} ,\\ 
  m_{res}(x )\in \RR ,  \ \ |  m_{res}|_{L^\infty}\le C|v_0|_{L^1\cap L^\infty},\\
|  \lambda m_{0,r} |_{L^\infty\corr{(D_0)}},\,| \frac{ m_{0,r}}{1+|x|} |_{L^\infty\corr{(D_{0})}}\le C  |v_0|_{L^1\cap L^\infty}  ;
\ea\label{E:pole}\\
\nonumber\\
\ba{c}
m(x, \lambda)  
 = {  m_{ \kappa_j,0}}(x,\lambda)+  m_{ \kappa_j,r}(x,\lambda),
\ \lambda\in D^\times_{ \kappa_j}, \\ 
 m_{  {\kappa}_1,0}(x, \lambda)= \frac{     \Theta_1(x )}{1-\gamma { \cot^{-1}\frac { \lambda_R-\kappa_1}{|\lambda_I|}}} ,
\\ 
 m_{ {\kappa}_2,0}(x, \lambda)= {\frac{  -\frac{\kappa_1}{\kappa_2}e^{(\kappa_1-\kappa_2)x_1+(\kappa_1^2-\kappa_2^2)x_2-\ln a}\Theta_1(x)}{1-\gamma { \cot^{-1}\frac { \kappa_2-\lambda_R}{|\lambda_I|}}}} , \\ 
|  m_{  {\kappa}_j,0} |_{L^\infty}\le C|v_0|_{L^1\cap L^\infty}, \ \ 
 {  m_{    {\kappa}_j,r}(x, \kappa_j)=0}, \\
 |  m_{ {\kappa}_j,r}  |_{L^\infty\corr{(D_{\kappa_j})}},\   |\frac{\frac\partial{\partial s}  m_{ {\kappa}_j,r} }{1+|x |}|_{L^\infty\corr{(D_{\kappa_j})}}\le C |v_0|_{L^1\cap L^\infty},
\ea\label{E:discon}
\end{gather}with 
\beq\label{E:discrete-coeff}
\begin{array}{c}
\textit{{  $\Theta_{1}(x)=(1+{\mathfrak G}_1\ast v_0 )^{-1}\chi_1\in\RR$}},\\
\textit{{  $\gamma =-\frac 1\pi\iint\xi_1(x )v_0(x )\Theta_1(x)dx\in\RR.$}}
\ea\eeq

\end{theorem}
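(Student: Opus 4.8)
The plan is to convert the boundary value problem \eqref{E:direct-spectral} into an equivalent integral equation via the Green function $G$ constructed in Section~\ref{S:oblique-green}, and then extract all the stated analytic structure from the corresponding limits of $G$. Concretely, I would write $m(x,\lambda)=\chi(x,\lambda)-G\ast(v_0\,m)(x,\lambda)$, i.e.
\beq\label{E:plan-int-eq}
\ba{c}
m(x,\lambda)=\chi(x,\lambda)-\iint G(x,x',\lambda)v_0(x')m(x',\lambda)\,dx'.
\ea
\eeq
That this integral equation is equivalent to \eqref{E:direct-spectral} uses $LG=\delta$ from \eqref{E:sym-0} together with the decay \eqref{E:ast} of $G\ast f$ (so the boundary condition $m-\chi\to0$ is automatic). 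The existence and uniqueness of a solution for fixed $\lambda\in\CC\setminus\{0,\kappa_1,\kappa_2\}$ I would obtain by a contraction/Neumann-series argument: define the operator $Km=G\ast(v_0 m)$ on $L^\infty$ and use the uniform bound \eqref{E:eigen-green} to show $\|K\|\le C|v_0|_{L^1\cap L^\infty}$, which is $\ll1$ by hypothesis, so $(I+K)^{-1}$ exists by Neumann series. The bound \eqref{E:bdd} for $(1-E_0)m$ and the reality $m(x,\lambda)=\overline{m(x,\overline\lambda)}$ then follow respectively from the geometric-series estimate and from the algebraic symmetry \eqref{E:green-sym} applied to \eqref{E:plan-int-eq}.

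The substance of the theorem is the local behavior of $m$ near $0$, $\kappa_1$, $\kappa_2$, and here I would feed the pointwise structure of $G$ from Proposition~\ref{P:eigen-green} and Lemma~\ref{L:discontinuities} through the integral equation. Near $\lambda=0$: since $\chi(x,\cdot)$ has a simple pole at $0$ with computable residue (from \eqref{E:sato}) and $G$ contributes a corresponding simple pole, I would separate $m=m_{res}(x)/\lambda+m_{0,r}$, solve the integral equation at the residue level to identify $m_{res}$, and bound the regular remainder $m_{0,r}$ using the $D_0$-estimates on $G$; reality of $m_{res}$ comes again from \eqref{E:green-sym}. Near $\lambda=\kappa_j$: this is where the discontinuity structure of the Green function enters. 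From \eqref{E:g-asym-pm-i-prep-limit-sym} the leading part of $G$ near $\kappa_1$ is $\tfrac1\pi\chi_1(x)\xi_1(x')\cot^{-1}\frac{\lambda_R-\kappa_1}{|\lambda_I|}$, so substituting into \eqref{E:plan-int-eq} produces a term proportional to $\chi_1(x)\cot^{-1}(\cdots)$ times the scalar $\iint\xi_1(x')v_0(x')m(x',\lambda)\,dx'$. The key algebraic move is that this scalar, evaluated against the leading profile, closes into a self-consistent relation: writing $m_{\kappa_1,0}=\Theta_1/(1-\gamma\cot^{-1}\frac{\lambda_R-\kappa_1}{|\lambda_I|})$ with $\Theta_1=(1+\mathfrak G_1\ast v_0)^{-1}\chi_1$ and $\gamma=-\tfrac1\pi\iint\xi_1 v_0\Theta_1\,dx$ exactly solves the fixed-point equation for the $\cot^{-1}$-dependent part. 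The $\kappa_2$ formula then follows from the $\kappa_1$ formula via the exact symmetry \eqref{E:green-exp-1-g} relating $G$ at $\kappa_2+0^+e^{i\alpha}$ to $G$ at $\kappa_1+0^+e^{i(\pi+\alpha)}$, which propagates to $m$ through the integral equation; this also explains the prefactor $-\tfrac{\kappa_1}{\kappa_2}e^{(\kappa_1-\kappa_2)x_1+(\kappa_1^2-\kappa_2^2)x_2-\ln a}$.

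For the remainder estimates I would set $m_{\kappa_j,r}=m-m_{\kappa_j,0}$, verify $m_{\kappa_j,r}(x,\kappa_j)=0$ by construction (the singular $\cot^{-1}$ part absorbs the entire jump), and bound $m_{\kappa_j,r}$ and its $s$-derivative using the finer Green-function estimates \eqref{E:green-variation-derivative} on $\omega_j$, in particular the weighted bound $|\omega_j/(\lambda-\kappa_j)|\le C(1+1/\sqrt{|x_2-x_2'|})(1+|x'-x|)$, which is what yields the factor $(1+|x|)$-free statement after convolving against the rapidly decaying $v_0$. The main obstacle I anticipate is precisely the self-consistency at $\kappa_j$: one must show that inserting the full integral equation into the scalar $\iint\xi_j v_0 m$ does not generate uncontrolled higher corrections to the $\cot^{-1}$ profile, i.e. that the nonlinear-in-$\cot^{-1}$ terms genuinely resum into the single rational expression $1/(1-\gamma\cot^{-1}(\cdots))$. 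This requires carefully separating, within $G\ast(v_0 m)$, the rank-one singular contribution (which must be treated self-consistently) from the regular remainder (which is a small perturbation controlled by $|v_0|_{L^1\cap L^\infty}\ll1$), and checking that $|\gamma|$ stays small enough that $1-\gamma\cot^{-1}$ never vanishes on $D_{\kappa_j}^\times$. I would handle this by a fixed-point argument in the product space of the scalar coefficient $\gamma$ and the remainder $m_{\kappa_j,r}$, closing the smallness via the hypothesis $|v_0|_{L^1\cap L^\infty}\ll1$.
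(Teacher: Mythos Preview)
Your proposal is correct and follows essentially the same route as the paper: reformulate \eqref{E:direct-spectral} as the integral equation $m=\chi-G\ast(v_0 m)$, invoke Proposition~\ref{P:eigen-green} for a Neumann-series solution, read off the reality symmetry from \eqref{E:green-sym}, and extract the local structure at $0$ and $\kappa_j$ from Lemma~\ref{L:discontinuities}; the $\kappa_2$ formula is then deduced from the $\kappa_1$ one via \eqref{E:green-exp-1-g}, exactly as you describe.

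The one place where the paper is slightly more direct than your outline is the ``self-consistency'' you flag as the main obstacle. You propose a fixed-point argument in the scalar $\gamma$ and the remainder; the paper instead observes that the singular part of $G$ near $\kappa_1$ is \emph{rank one} in $x,x'$ (namely $\tfrac1\pi\chi_1(x)\xi_1(x')\cot^{-1}(\cdots)$), so after first inverting the regular part $1+\mathfrak G_1\ast v_0$ one is left with a Neumann series in a rank-one operator. Each iterate therefore produces exactly one more scalar factor $\gamma\cot^{-1}(\cdots)$ applied to $\Theta_1$, and the series is literally geometric:
\[
\wp_1^{-1}\chi_1=\Theta_1+\gamma\cot^{-1}(\cdots)\,\Theta_1+(\gamma\cot^{-1}(\cdots))^2\Theta_1+\cdots=\frac{\Theta_1}{1-\gamma\cot^{-1}(\cdots)}.
\]
In particular $\gamma$ is not an unknown to be solved for but is computed once and for all from $\Theta_1=(1+\mathfrak G_1\ast v_0)^{-1}\chi_1$; smallness of $|v_0|_{L^1\cap L^\infty}$ gives $|\gamma|\ll 1$ directly, so $1-\gamma\cot^{-1}(\cdots)$ never vanishes. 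This rank-one observation dissolves the obstacle you anticipated without any auxiliary fixed-point machinery.
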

\begin{proof} $\underline{\emph{Step 1  (Proof of \eqref{E:direct-spectral}-\eqref{E:pole})}}:$ Applying {Proposition} \ref{P:eigen-green} and the assumption $\partial_ x^kv_0 \in  {L^1\cap L^\infty}$, $ 0\le |k|\le 2 $, $|v_0|_{{L^1\cap L^\infty}}\ll 1$, for $\lambda\ne 0$, one can prove  the unique solvability of  the   integral equation 
\beq\label{E:spectral-integral}
\ba{c}
 m(x,\lambda)=\chi(x,\lambda)-  G\ast v_0m (x,\lambda),\\
 m(x,\lambda)\in L^\infty.
\ea
\eeq where the $\ast$ operator is defined by \eqref{E:ast}. 
Besides,  from \eqref{E:sym-0} and Lemma \ref{L:green-heat},  the unique solvability of \eqref{E:direct-spectral} is equivalent to that of \eqref{E:spectral-integral}. 

\corr{  Applying \eqref{E:spectral-integral} and Proposition \ref{P:eigen-green},
\[
\ba{rl}
|(1-E_0)m(x,\lambda)|=|(1+G\ast v_0)^{-1}(1-E_0)\chi|\le C|v_0|_{L^1\cap L^\infty} .
\ea\] So \eqref{E:bdd} is justified. Similarly, for $\lambda\in D_0$, using \eqref{E:sato}, \eqref{E:spectral-integral}, and Proposition \ref{P:eigen-green},
\[
\ba{ l}
m= (1+G\ast v_0)^{-1}(\frac{\chi_0}\lambda+[\chi-\frac{\chi_0}\lambda]).
\ea\]One has
\beq\label{E:g-decomp}
\ba{rl}
m_{res}=& (1+G^o_0\ast v_0)^{-1}\chi_0\\
=&-(1+G^o_0\ast v_0)^{-1}\frac{\kappa_1e^{\theta_1}+\kappa_2ae^{\theta_2}}
{ e^{\theta_1}+ ae^{\theta_2}}\ \in\ L^\infty(D_0),\\
m_{0,r}= &(1+G\ast v_0)^{-1} [\chi-\frac{\chi_0}\lambda]\\
&+ (1+G\ast v_0)^{-1}\frac{G-G^o_0}\lambda\ast v_0(1+G_0^o\ast v_0)^{-1} \chi_0.
\ea
\eeq So \eqref{E:pole} follows.}

$\underline{\emph{Step 2  (Proof of \eqref{E:discon}-\eqref{E:discrete-coeff})}}:$ For $\lambda=\lambda_R+i\lambda_I\in D_{ \kappa_j}^\times$, $j=1,2$, applying \eqref{E:eigen-green}, \eqref{E:spectral-integral}, and defining
\beq\label{E:leading-m-coeff}
\ba{rl}
\wp_j(x,x',\alpha)=&\left\{
{\begin{array}{l}
 1+[{\mathfrak G}_1+\frac 1\pi 
\chi_1(x)\xi_1(x')\cot^{-1}\frac{\lambda_R-\kappa_1}{|\lambda_I|}]\ast  v_0, \\
 1+[{\mathfrak G}_2-\frac 1\pi \chi_2(x)\xi_2(x')\cot^{-1}\frac{\kappa_2-\lambda_R}{|\lambda_I|} ]\ast  v_0,
\end{array}}
\right. 
\\
\Theta_j(x)=&[1+{\mathfrak G}_j(x,x')\ast  v_0(x')]^{-1}\chi_j(x'),\\
\gamma_1 =&-\frac 1\pi\iint\xi_1(x )v_0(x )\Theta_1(x)dx,\\
\gamma_2 =&\frac 1\pi\iint\xi_2(x )v_0(x )\Theta_2(x)dx,
\ea
\eeq
 one has
\[
\ba{c}
m(x,\lambda)=(1+\wp_j^{-1}\omega_j  \ast  v_0)^{-1}\wp_j^{-1}\chi(x,\lambda),
\ea
\] and
\[
\ba{rl}
\blacktriangleright &m_{\kappa_1,0}(x,\lambda)=\wp_1^{-1}\chi_1(x,\lambda)\\
=&(1+[{\mathfrak G}_1 +\frac 1\pi
\chi_1(x) \xi_1 (x')\cot^{-1}\frac{\lambda_R-\kappa_1}{|\lambda_I|}]\ast  v_0)
^{-1}\chi_1\\
=&[1+{\mathfrak G}_1\ast  v_0]^{-1}\chi_1\\
&+  ([1+{\mathfrak G}_1\ast  v_0]^{-1}\frac {-1}\pi
\chi_1(x) \xi_1 (x')\cot^{-1}\frac{\lambda_R-\kappa_1}{|\lambda_I|}\ast  v_0 )[1+{\mathfrak G}_1\ast  v_0]^{-1}\chi_1\\
&+  ([1+{\mathfrak G}_1\ast  v_0]^{-1}\frac {-1}\pi
\chi_1(x) \xi_1 (x')\cot^{-1}\frac{\lambda_R-\kappa_1}{|\lambda_I|}\ast  v_0 )^2[1+{\mathfrak G}_j\ast  v_0]^{-1}\chi_1\\
& +\cdots\\
=&\Theta_1(x)+\gamma_1 \cot^{-1}\frac{\lambda_R-\kappa_1}{|\lambda_I|} \Theta_1 +\left(\gamma_1 \cot^{-1}\frac{\lambda_R-\kappa_1}{|\lambda_I|} \right)^2\Theta_1 +\cdots\\
=& \frac{\Theta_1(x)}{1-\gamma_1  \cot^{-1}\frac{\lambda_R-\kappa_1}{|\lambda_I|} },\\
&\\
\blacktriangleright&m_{\kappa_2,0}(x,\lambda)=\wp_2^{-1}\chi_2(x,\lambda)\\
=&[1+{\mathfrak G}_2\ast  v_0]^{-1}\chi_2\\
&+  ([1+{\mathfrak G}_2\ast  v_0]^{-1}\frac 1\pi 
\chi_2(x) \xi_2 (x')\cot^{-1}\frac{\kappa_2-\lambda_R}{|\lambda_I|}\ast  v_0 )[1+{\mathfrak G}_2\ast  v_0]^{-1}\chi_2
\\
&+  ([1+{\mathfrak G}_2\ast  v_0]^{-1}  \frac 1\pi 
\chi_2(x) \xi_2 (x')\cot^{-1}\frac{\kappa_2-\lambda_R}{|\lambda_I|}\ast  v_0 )^2[1+{\mathfrak G}_2\ast  v_0]^{-1}\chi_2\\
& +\cdots\\
=&\Theta_2+\gamma_2 \cot^{-1}\frac{\kappa_2-\lambda_R}{|\lambda_I|} \Theta_2+\left(\gamma_2 \cot^{-1}\frac{\lambda_R-\kappa_2}{|\lambda_I|} \right)^2\Theta_2 +\cdots\\
=& \frac{\Theta_2(x)}{1-\gamma_2  \cot^{-1}\frac{\kappa_2-\lambda_R}{|\lambda_I|} }.
\ea
\]

To investigate the symmetries between $\Theta_j$ and $\gamma_j$, we combining \eqref{E:green-exp-1-g} with \eqref{E:sym-0},   \eqref{E:eigen-green}, and
 \[
\ba{c}\chi_2(x)=-\frac{\kappa_1}{\kappa_2}\chi_1(x)e^{(\kappa_1-\kappa_2)x _1+(\kappa_1^2-\kappa_2^2)x _2-\ln a},\\
 \ \xi_2(x )=\frac{\kappa_2}{\kappa_1}\xi_1(x )e^{-(\kappa_1-\kappa_2)x _1-(\kappa_1^2-\kappa_2^2)x _2+\ln a},

\ea
\]we obtain
\begin{equation}\label{E:green-exp-2}
\begin{array}{c}
\Theta_2(x)=-\frac{\kappa_1}{\kappa_2}e^{(\kappa_1-\kappa_2)x_1+(\kappa_1^2-\kappa_2^2)x_2-\ln a}\Theta_1(x),\\
\gamma_1=\gamma_2=\gamma 
\end{array}
\end{equation}which,  combining with \eqref{E:arccot}, prove  \eqref{E:discon}.

\end{proof}

\begin{theorem}\label{T:sd-continuous}   
Suppose $\partial_x^kv_0\in  {L^1\cap L^\infty}$, $ | k|\le 2$, $ {|v_0|_{L^1\cap L^\infty}}\ll 1$, and $v_0(x)\in\RR$. Then
\beq\label{E:conti}
\ba{c}
\partial_{\overline\lambda}m(x, \lambda)
=  s_c(\lambda) e^{(\overline\lambda-\lambda)x_1+(\overline\lambda^2-\lambda^2)x_2  }m{(x, \overline\lambda)} , \ \lambda_I\ne 0,
\ea
\eeq with
\beq\label{E:conti-sd}
\ba{rl}
  s_c(\lambda) =&\frac {\mbox{sgn}(\lambda_I)}{2\pi i} \iint e^{-[(\overline\lambda-\lambda)x_1+(\overline\lambda^2-\lambda^2)x_2 ] } \xi(x, \overline\lambda) v_0(x)m(x, \lambda)dx \\
 \equiv&\frac {\mbox{sgn}(\lambda_I)}{2\pi i} \widehat{\xi  v_0 m }(\frac {\overline\lambda-\lambda}{2\pi i}, \frac {\overline\lambda^2-\lambda^2}{2\pi i};\lambda).
 
\end{array}
\eeq  
 Moreover, if $
 \partial_x^k v_0\in {L^1}\cap L^\infty$, $|k|\le 2 $, then
\beq
\ba{c}
| (1-E_{D_{\kappa_1}\cup D_{\kappa_2}}  )  s_c  |_{   L^2(|\lambda_I| d\overline\lambda \wedge d\lambda)\cap L^\infty}  \le  C\sum_{|k|\le 2} |\partial_x^kv_0| _{L^1\cap L^\infty},
\ea\label{E:pm-i-new}  
\eeq
and if $
\corr{(1+|x|)}\partial_x^k v_0\in {L^1}\cap L^\infty$, $|k|\le 2 $, then
\begin{gather}
\ba{c} 
  s_c(\lambda)=
\left\{
{\ba{ll}
 \frac{ \frac {i}{ 2} \mbox{sgn}(\lambda_I)}{\overline\lambda-\kappa_1}\frac {+\gamma}{1-\gamma { \cot^{-1}\frac { \lambda_R-\kappa_1}{|\lambda_I|}}}+\mbox{sgn}(\lambda_I)h_1(\lambda),&\lambda\in D^ \times_{ \kappa_1 },\\
 \frac{ \frac {i}{2}\mbox{sgn}(\lambda_I) }{\overline\lambda-\kappa_2}\frac {-\gamma}{1-\gamma { \cot^{-1}\frac { \kappa_2-\lambda_R}{|\lambda_I|}}}+\mbox{sgn}(\lambda_I) {h_2}(\lambda),&\lambda\in  D^ \times_{ \kappa_2 },
 \\
\mbox{sgn}(\lambda_I) {h_0}(\lambda),&\lambda\in  D^\times _{ 0},
\ea}
\right.
\ea\label{E:cd-decomposition} 
\end{gather}
where 
   $E_{z,a}$, $D_z^\times$, $\cot^{-1}\frac{ {\kappa_2}- \lambda_R}{|\lambda_I|}$, $\cot^{-1}\frac{  \lambda_R-\kappa_1}{|\lambda_I|}$, $\gamma $ are defined by  Definition \ref{D:terminology}, \eqref{E:arccot}, and \eqref{E:discrete-coeff}. Moreover,
\beq \label{E:cd-decomposition-new}
\ba{c}
 |\gamma|_{L^\infty}\le |v_0|_{L^1}, \ 
  \corr{  { |\sum_{0\le |l|\le 1} \partial_{\lambda_R}^{l_1}\partial_{\lambda_I}^{l_2}h_k|_{L^\infty}\le C |(1+|x|)     v_0|_{L^1\cap L^\infty}, }  } 
 \\
 h_k(\lambda)=\overline{h_k( \overline\lambda)} .\ea 
\eeq  
  
\end{theorem}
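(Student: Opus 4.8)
The plan is to obtain the $\overline\partial$-equation \eqref{E:conti} by differentiating the integral equation \eqref{E:spectral-integral} in $\overline\lambda$, and then to read off $s_c$ from the differentiated kernel. First I would fix $\lambda$ with $\lambda_I\ne 0$ and apply $\partial_{\overline\lambda}$ to $m=\chi-G\ast v_0m$. Since $\chi(x,\cdot)$ is rational with its only pole at $0$, $\partial_{\overline\lambda}\chi=0$ on $\{\lambda_I\ne 0\}$; differentiating under the integral (justified by the uniform bounds of Proposition \ref{P:eigen-green} and the decay of $v_0$) and inserting Lemma \ref{L:cont-debar} gives
\[
\ba{c}
\partial_{\overline\lambda}m + G\ast v_0(\partial_{\overline\lambda}m) = s_c(\lambda)\,e^{(\overline\lambda-\lambda)x_1+(\overline\lambda^2-\lambda^2)x_2}\chi(x,\overline\lambda),
\ea
\]
with $s_c$ exactly the $x$-integral in \eqref{E:conti-sd}, which I would recognize as $\widehat{\xi v_0 m}$. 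To identify the solution I would check that $\widetilde w(x,\lambda)=s_c(\lambda)e^{(\overline\lambda-\lambda)x_1+(\overline\lambda^2-\lambda^2)x_2}m(x,\overline\lambda)$ solves the same equation. The key algebraic fact is that the heat Green function of Lemma \ref{L:green-heat} depends on $\lambda$ only through $\lambda_R$ and $\lambda_I^2$, so $\mathcal G(x,x',\lambda)=\mathcal G(x,x',\overline\lambda)$; combined with \eqref{E:sym-0} this forces
\[
\ba{c}
G(x,x',\lambda)\,e^{(\overline\lambda-\lambda)x_1'+(\overline\lambda^2-\lambda^2)x_2'}=e^{(\overline\lambda-\lambda)x_1+(\overline\lambda^2-\lambda^2)x_2}\,G(x,x',\overline\lambda),
\ea
\]
so that substituting $m(\cdot,\overline\lambda)=\chi(\cdot,\overline\lambda)-G(\overline\lambda)\ast v_0 m(\overline\lambda)$ makes $\widetilde w$ satisfy the identical integral equation. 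Uniqueness (invertibility of $1+G\ast v_0$ from Theorem \ref{T:KP-eigen-existence}) then gives $\partial_{\overline\lambda}m=\widetilde w$, which is \eqref{E:conti}.

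For \eqref{E:pm-i-new} I would estimate $s_c$ directly from \eqref{E:conti-sd}. On the complement of $D_{\kappa_1}\cup D_{\kappa_2}$ the amplitude $\xi(x,\overline\lambda)$ is bounded (near $0$ its vanishing cancels the $\lambda^{-1}$ pole of $m$ from \eqref{E:pole}), so $|s_c|\le C|v_0|_{L^1}\|\xi\|_{L^\infty}\|m\|_{L^\infty}\le C|v_0|_{L^1\cap L^\infty}$, giving the $L^\infty$ bound. For the $L^2$ bound, note the exponent in \eqref{E:conti-sd} is purely imaginary, $i(2\lambda_I x_1+4\lambda_I\lambda_R x_2)$, so $s_c$ is a value of the Fourier transform of $\xi v_0 m$ at frequency $(\eta_1,\eta_2)=(-\lambda_I/\pi,-2\lambda_I\lambda_R/\pi)$; the Jacobian of $(\lambda_R,\lambda_I)\mapsto(\eta_1,\eta_2)$ equals $2|\lambda_I|/\pi^2$, which is precisely the weight carried by $|\lambda_I|\,d\overline\lambda\wedge d\lambda$. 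Thus Plancherel converts $\|s_c\|_{L^2(|\lambda_I|d\overline\lambda\wedge d\lambda)}$ into $\|\xi v_0 m\|_{L^2(dx)}$. The residual $\lambda$-dependence of the amplitude is handled by integrating by parts in $x$, which is where the hypotheses $\partial_x^k v_0\in L^1\cap L^\infty$, $|k|\le 2$, enter: they supply the decay in $(\eta_1,\eta_2)$, hence in $\lambda$, needed for integrability over the unbounded region.

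Finally, \eqref{E:cd-decomposition} is obtained by feeding the local expansions of $m$ from \eqref{E:discon} into \eqref{E:conti-sd} and isolating the singular contribution. Near $\kappa_j$ the factor $\xi(x,\overline\lambda)$ has a simple pole with residue $\xi_j(x)$, while $m(x,\lambda)\sim m_{\kappa_j,0}$; their product produces $(\overline\lambda-\kappa_j)^{-1}$, and using $\iint\xi_1 v_0\Theta_1\,dx=-\pi\gamma$ from \eqref{E:discrete-coeff} (with the companion $\kappa_2$ identity through \eqref{E:green-exp-2}) reproduces the $\gamma/(1-\gamma\cot^{-1})$ numerators displayed in \eqref{E:cd-decomposition}; near $0$ the vanishing of $\xi$ removes the pole of $m$, leaving only $\mathrm{sgn}(\lambda_I)h_0$. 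The bound $|\gamma|\le|v_0|_{L^1}$ is immediate from $\|\xi_1\|_{L^\infty},\|\Theta_1\|_{L^\infty}\le C$, and the reality $s_c(\lambda)=\overline{s_c(\overline\lambda)}$ (from $m(x,\lambda)=\overline{m(x,\overline\lambda)}$, $\overline{\xi(x,\lambda)}=\xi(x,\overline\lambda)$, and $v_0$ real) together with the matching symmetry of the explicit singular terms yields the stated symmetry of the $h_k$. I expect the main obstacle to be the derivative estimate $\sum_{|l|\le 1}|\partial_{\lambda_R}^{l_1}\partial_{\lambda_I}^{l_2}h_k|_{L^\infty}\le C|(1+|x|)v_0|_{L^1\cap L^\infty}$: differentiating $s_c$ brings down factors of $x$ from the exponential and $\partial_s m_{\kappa_j,r}$ from the eigenfunction, while differentiating $\cot^{-1}\frac{\lambda_R-\kappa_j}{|\lambda_I|}$ produces terms singular like $|\lambda-\kappa_j|^{-1}$. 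Controlling these uniformly requires the vanishing $m_{\kappa_j,r}(x,\kappa_j)=0$ and the weighted bound $|\partial_s m_{\kappa_j,r}/(1+|x|)|\le C|v_0|_{L^1\cap L^\infty}$ from \eqref{E:discon}, combined with the $(1+|x|)$-weighted integrability of $v_0$, so that the apparent singularities cancel against the vanishing of the remainder at $\kappa_j$.
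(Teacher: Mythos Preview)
Your proposal is correct and follows essentially the same route as the paper: differentiate the resolvent equation for $m$, insert Lemma~\ref{L:cont-debar} for $\partial_{\overline\lambda}G$, use the conjugation identity $G_\lambda\,\rho=\rho\,G_{\overline\lambda}$ to recognize $s_c(\lambda)\rho\,m(\cdot,\overline\lambda)$, and then read off the local structure of $s_c$ from the expansions \eqref{E:discon}--\eqref{E:discrete-coeff} together with the Fourier/Plancherel argument (with the Jacobian producing the weight $|\lambda_I|$) for \eqref{E:pm-i-new}. The only cosmetic difference is that you obtain the conjugation identity from the symmetry $\mathcal G(x,x',\lambda)=\mathcal G(x,x',\overline\lambda)$ (visible in Lemma~\ref{L:green-heat}) plus \eqref{E:sym-0}, whereas the paper derives it from the fact that $\rho$ is a null solution of the heat operator $p_\lambda(D)$; both arguments yield \eqref{E:green-exp} and the remainder of the proof is identical.
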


\begin{proof}   
$\underline{\emph{Step 1 (Proof of \eqref{E:conti})}}:$   Denote   $\rho(x,  \lambda )= e^{(\overline\lambda-\lambda)x_1+(\overline\lambda^2-\lambda^2)x_2  }$. Note $\rho(x, \lambda )$ is annihilated by the heat operator $p_\lambda(D)\equiv -\partial_{x_2}+\partial_{x_1}^2+2 \lambda\partial_{x_1}$. So $
  p_\lambda(D)f=e^{(\overline\lambda-\lambda)x_1+(\overline\lambda^2-\lambda^2)x_2  }p_{ \overline \lambda}(D)e^{-[(\overline\lambda-\lambda)x_1+(\overline\lambda^2-\lambda^2)x_2 ] }f
$ which yields 
\begin{equation}\label{E:green-exp}
\ba{c}
  G_\lambda\,  \rho(x,  \lambda )=\rho(x,  \lambda) \,   G_{ \overline\lambda}.
\ea
\end{equation}
Therefore, for  $\lambda_I\ne 0$,   by Lemma \ref{L:cont-debar},  \eqref{E:spectral-integral}, \eqref{E:conti-sd}, and \eqref{E:green-exp},
\[
\ba{rl}
&\partial_{\overline\lambda}m {(x,\lambda)}\\
=& \partial_{\overline\lambda}\left[(1+G_\lambda\ast v_0)^{-1}\chi\right] \\
=&- (1+ G_\lambda\ast v_0)^{-1}\left(\partial_{\bar\lambda}  G_\lambda\ast v_0\right)m {(x,\lambda)} \\
=& 
-(1+ G_\lambda\ast v_0)^{-1}\frac {\texttt{sgn}(\lambda_I)\rho(x-x',\lambda )\chi(x, \overline\lambda)\xi(x', \overline\lambda)}{-2\pi i} \ast v_0m \\
=&  s_c(\lambda)(1+ G_\lambda\ast v_0)^{-1}\rho(x,\lambda)\chi(x, \overline\lambda) \\
=&  s_c(\lambda) \rho(x,\lambda)(1+G_{ \overline\lambda}\ast v_0)^{-1}\chi(x, \overline\lambda) \\
=&  s_c(\lambda) e^{(\overline\lambda-\lambda)x_1+(\overline\lambda^2-\lambda^2)x_2  }m{(x, \overline\lambda)}.
\ea
\]

$\underline{\emph{Step 2  (Proof of \eqref{E:cd-decomposition}, \eqref{E:cd-decomposition-new})}}:$ From    \eqref{E:sato-adjoint},  \eqref{E:discon}-\eqref{E:discrete-coeff}, and \eqref{E:conti-sd},
\[
\ba{rl}
  &s_c(\lambda) \\
=&\frac {\mbox{sgn}(\lambda_I)}{2\pi i} \iint e^{-[(\overline\lambda-\lambda)x_1+(\overline\lambda^2-\lambda^2)x_2 ] }\xi(x, \overline\lambda) v_0(x)m(x, \lambda)dx 
\\
=  &\left\{
{\ba{l }
\frac{\frac {\mbox{sgn}(\lambda_I)}{2\pi i}}{\overline\lambda-\kappa_1} \iint   \xi_1 (x)v_0(x)\frac{     \Theta_1(x )}{1-\gamma { \cot^{-1}\frac { \lambda_R-\kappa_1}{|\lambda_I|}}}dx+\mbox{sgn}(\lambda_I)h_1(\lambda),    \lambda\in D_{\kappa_1}^\times , \\
\frac{\frac {\mbox{sgn}(\lambda_I)}{2\pi i}}{\overline\lambda-\kappa_2} \iint   \xi_2 (x)v_0(x)\frac{     \Theta_2(x )}{1-\gamma { \cot^{-1}\frac {\kappa_2- \lambda_R}{|\lambda_I|}}}dx+\mbox{sgn}(\lambda_I)h_2(\lambda), \lambda\in D_{\kappa_2}^\times ,\\
\frac{\frac {\mbox{sgn}(\lambda_I)}{2\pi i}}{\lambda} \iint   \xi  (x,0)v_0(x)m_{res}(x)dx+\mbox{sgn}(\lambda_I)h_0(\lambda), \quad\quad\quad \lambda\in D_{0}^\times
\ea}
\right.\\
=   &\left\{
{\ba{ll}
-\frac{\frac {\mbox{sgn}(\lambda_I)}{2  i} \frac \gamma{1-\gamma { \cot^{-1}\frac { \lambda_R-\kappa_1}{|\lambda_I|}}}}{\overline\lambda-\kappa_1}+\mbox{sgn}(\lambda_I)h_1(\lambda), & \lambda\in D_{\kappa_1}^\times \\
+\frac{\frac {\mbox{sgn}(\lambda_I)}{2 i} \frac \gamma{1-\gamma { \cot^{-1}\frac { \kappa_2-\lambda_R}{|\lambda_I|}}}}{\overline\lambda-\kappa_2}+\mbox{sgn}(\lambda_I)h_2(\lambda), & \lambda\in D_{\kappa_2}^\times , \\
\mbox{sgn}(\lambda_I)h_0(\lambda),& \lambda\in D_{0}^\times,

\ea}
\right.
\ea
\]\corr{which is \eqref{E:cd-decomposition} and estimates for \eqref{E:cd-decomposition-new} can be derived directly.}
 
$\underline{\emph{Step 3  (Proof of \eqref{E:pm-i-new})}}:$ Using a similar argument as in $\underline{\emph{Step 2}}$, one can prove $
|E_0s_c|_{L^\infty}\le C$ as well. Moreover, from  \eqref{E:conti-sd}, the Fourier theory, and Theorem \ref{T:KP-eigen-existence},   
 \[
 \ba{rl}
 &\iint{|(1-E_{D_{\kappa_1}\cup D_{\kappa_2}} (\lambda) )  s_c(\lambda)|^2}|\lambda_I|d\overline\lambda\wedge d\lambda\\
 = &\iint{|(1-E_{D_{\kappa_1}\cup D_{\kappa_2}} (\lambda) )   \frac {\mbox{sgn}(\lambda_I)}{2\pi i} \widehat{\xi  v_0 m }(\frac {\overline\lambda-\lambda}{2\pi}, \frac {\overline\lambda^2-\lambda^2}{2\pi};\lambda)|^2}|\lambda_I|d\overline\lambda\wedge d\lambda\\
 \le\ &C\iint \frac{\sum_{k ,j\le 2} |\partial_{x_1}^{k}\partial_{x_2}^{j}v_0|^2_{L^1\cap L^\infty}}{(1+|\lambda_I|^2+|\lambda_R\lambda_I|^2)^2}|\lambda_I|d\overline\lambda\wedge d\lambda\\
 \le & C  \sum_{k ,j\le 2} |\partial_{x_1}^{k}\partial_{x_2}^{j}v_0|^2_{L^1\cap L^\infty}.
  \ea
\]
 
\end{proof}

\corr{Based on the characterization of the eigenfunction $m$, we define the eigenfunction space $W$ and the spectral transformation $T$ in Definition \ref{D:quadrature-hat} and \ref{D:spectral}.}
{\begin{definition}\label{D:quadrature-hat}
  The  eigenfunction space ${ W }\equiv { W}_{x}$ is the set of functions 
 \[
\begin{array}{rl}
i. & \phi (x, \lambda)=\overline{ \phi (x, \overline\lambda)};\\
ii. & (1-  E_0 )\phi(x, \lambda)\in L^\infty;\\
iii. & \phi(x, \lambda)=\frac{\phi_{res}(x )}{\lambda }  +\phi_{0,r}(x, \lambda),\ \lambda \in D_0^\times,
\\
&\phi_{res}(x ), \ \corr{\lambda}\phi_{0,r}(x ,\lambda),\ 
   \frac{\phi_{0,r}(x, \lambda)}{1+|x| } \in L^\infty(D_0) ;
\\
iv. & \corr{\phi (x, \lambda_2)=s_de^{(\kappa_1-\kappa_2)x_1+(\kappa_1^2-\kappa_2^2)x_2}\phi (x, \lambda_1) },\\
  &  
\corr{\lambda_2= \kappa_2+0^+e^{i\alpha},\  
 \lambda_1= \kappa_1+0^+e^{i(\pi+\alpha)},} \ s_d=-\frac{\kappa_1}{\kappa_2}e^{-\ln a},\\
& \phi(x,\lambda) , \  \frac{\frac\partial{\partial s} \phi_{  \kappa_j,r}(x,\lambda) }{1+|x| } \in L^\infty(D_{\kappa_j}).
\ea
\]
\end{definition} 

\begin{definition}\label{D:spectral} 
Define $\{0;\kappa_1,\kappa_2, s_d,    s _c(\lambda)\}$ as the set of scattering data, 
where $0$, location of the simple pole, $\kappa_j$, location of discontinuities, and $s_d\equiv-\frac {\kappa_1}{\kappa_2}e^{-\ln a}$, the norming constant,   are the {\bf \sl discrete scattering data};  and  $ {  s}_c(\lambda)$, the  {\bf \sl continuous scattering data}, is defined by \eqref{E:conti-sd}. Denote   $T$ as   the {\bf\sl forward scattering transform} by
\beq\label{E:cauchy-operator}
\ba{c}
T(\phi)(x, \lambda)  = {  s}_c(\lambda) e^{(\overline\lambda-\lambda)x_1+(\overline\lambda^2-\lambda^2)x_2  }\phi(x, \overline\lambda).
\ea
\eeq 
\end{definition}

\begin{definition}\label{D:cauchy}
Let   $\mathcal C$ be the Cauchy integral operator  defined by 
\beq\label{E:cauchy-new}
\ba{c}
\mathcal C(\phi)(x,  \lambda)=\mathcal C_\lambda(\phi) =-\frac 1{2\pi i}\iint\frac {\phi(x, \zeta)}{\zeta-\lambda}d\overline\zeta\wedge d\zeta. 
\ea
\eeq 
 
\end{definition}

\corr{We now provide estimates on the spectral transform of $m$ in order to formulate a Cauchy integral equation in Section \ref{S:cauchy}.}
\begin{theorem}\label{T:bdd} Suppose $
  \partial_x^k v_0\in {L^1}\cap L^\infty$, $|k|\le 2 $, $
|v_0|_{L^1\cap L^\infty}\ll 1$.   Then   
\[
\ba{rl}
&|\mathcal C   T m| _{L^\infty} \le C   (1+|x|)  \sum_{|k|\le 2}| \partial_x^k v_0|_{L^1\cap L^\infty}
 ,\\
 & \mathcal C T m (x, \lambda)\to 0 ,\quad\textit{ as $|\lambda|\to\infty$, $\lambda_I\ne 0$ .} 
\ea
\] 
\end{theorem}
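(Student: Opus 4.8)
The plan is to bound the solid Cauchy transform \eqref{E:cauchy-new} of $Tm$ by splitting the $\zeta$-plane according to the singular structure of the integrand. The first observation is that the exponent $(\overline\zeta-\zeta)x_1+(\overline\zeta^2-\zeta^2)x_2$ is purely imaginary, so the phase in \eqref{E:cauchy-operator} has modulus one and $|Tm(x,\zeta)|=|s_c(\zeta)|\,|m(x,\overline\zeta)|$. Hence it suffices to control $\iint|\zeta-\lambda|^{-1}|s_c(\zeta)|\,|m(x,\overline\zeta)|\,dA$ uniformly in $\lambda$, and I would decompose $\CC=\Omega\cup D_0\cup D_{\kappa_1}\cup D_{\kappa_2}$, where $D_0,D_{\kappa_1},D_{\kappa_2}$ are the disjoint disks of Definition \ref{D:terminology} and $\Omega$ is their complement.

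On $\Omega$ the eigenfunction $m(x,\overline\zeta)$ is bounded by \eqref{E:bdd} and $s_c$ obeys the weighted bound \eqref{E:pm-i-new}. Splitting $\Omega$ into $\{|\zeta|\le R\}$ and $\{|\zeta|>R\}$, on the bounded part the Cauchy kernel is locally integrable and the $L^\infty$ control of $s_c$ is enough; on the far part I would use the pointwise decay of $s_c$ contained in the proof of \eqref{E:pm-i-new} (the factor $(1+|\lambda_I|^2+|\lambda_R\lambda_I|^2)^{-1}$ produced by the two $x$-derivatives) to make $\iint_{|\zeta|>R}|\zeta-\lambda|^{-1}|s_c|\,dA$ converge uniformly in $\lambda$, exactly as in the Cauchy-transform estimate of \cite{Wu18}. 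This contributes a term of size $C\sum_{|k|\le2}|\partial_x^kv_0|_{L^1\cap L^\infty}$ with no growth in $x$.

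The heart of the matter is the three disks, where I would pass to the polar coordinate $\zeta=z+se^{i\alpha}$ of Definition \ref{D:terminology}, so $dA=s\,ds\,d\alpha$. Near $\kappa_j$ the simple pole of $\xi(x,\overline\zeta)$ forces $s_c(\zeta)=O(|v_0|/|\zeta-\kappa_j|)$, with residue coefficient $\iint\xi_j v_0 m\,dx$, while $m(x,\overline\zeta)$ stays bounded by \eqref{E:discon}; near $0$ it is instead $m(x,\overline\zeta)=m_{res}(x)/\overline\zeta+m_{0,r}$ that carries the pole, with $s_c$ bounded on $D_0$ by \eqref{E:pm-i-new}. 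In either case the factor $s$ in $dA$ absorbs the $1/s$ singularity, leaving $\iint a(\alpha,s)\,(\zeta-\lambda)^{-1}\,ds\,d\alpha$ with $a$ bounded. The one genuine difficulty, which I expect to be the main obstacle, is that the surviving Cauchy kernel $(\zeta-\lambda)^{-1}$ threatens a logarithmic blow-up $\sim\log|\lambda-z|^{-1}$ as $\lambda\to z$. It is defeated by an angular cancellation built into the data: carrying out the $s$-integration first produces the logarithm with coefficient equal to an angular average of the residue coefficient, and the factor $\mbox{sgn}(\lambda_I)=\mbox{sgn}(\sin\alpha)$ carried by $s_c$ is odd under $\alpha\mapsto2\pi-\alpha$, whereas the remaining factors are even under that reflection (by the reality $m(x,\lambda)=\overline{m(x,\overline\lambda)}$ and the invariance of $\cot^{-1}\frac{\lambda_R-\kappa_j}{|\lambda_I|}$ from \eqref{E:arccot}); hence the angular average vanishes. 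Evaluating this average at $s=0$ requires continuity of the residue coefficient at $z$, which follows from Riemann--Lebesgue since $\xi v_0 m\in L^1$. The residual bounded terms are estimated directly, and the growth factor $(1+|x|)$ in the final bound enters precisely through $m_{0,r}$ and $m_{\kappa_j,r}$, whose $(1+|x|)$-weighted bounds are supplied by \eqref{E:pole} and \eqref{E:discon}, using $m_{\kappa_j,r}(x,\kappa_j)=0$ to trade the vanishing of the remainder against the pole of $s_c$.

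Finally, for the decay $\mathcal CTm\to0$ as $|\lambda|\to\infty$ with $\lambda_I\ne0$, I would apply dominated convergence: the majorant assembled above is integrable (the near-singularities at $0,\kappa_1,\kappa_2$ are tamed by the polar measure and the tail by \eqref{E:pm-i-new}), while $|\zeta-\lambda|^{-1}\to0$ pointwise. In summary, the routine work is the $\Omega$ and decay estimates, and the essential technical point is the disk analysis, namely the angular cancellation that removes the logarithmic divergence of the Cauchy kernel against the poles of $s_c$ and $m$.
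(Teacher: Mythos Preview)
Your disk analysis via polar coordinates and angular cancellation can be made to work, but it misses the main structural shortcut the paper exploits. On each $D_{\kappa_j}$ (and on $D_0$) the paper does \emph{not} estimate the singular integral directly; instead it uses the identity $Tm=\partial_{\overline\zeta}m$ from \eqref{E:conti} and applies Stokes' theorem (Cauchy--Pompeiu) on $D_z\setminus(D_{z,\epsilon}\cup D_{\lambda,\epsilon})$. This converts $\mathcal C_\lambda(E_{\kappa_j}Tm)$ into $m(x,\lambda)-\tfrac{1}{2\pi i}\oint_{|\zeta-\kappa_j|=\kappa}\tfrac{m(x,\zeta)}{\zeta-\lambda}\,d\zeta$, both terms of which are controlled by the $L^\infty$ bound on $m$ from \eqref{E:discon}; near $0$ one replaces $m$ by $m_{0,r}=m-m_{res}/\lambda$ and picks up the factor $(1+|x|)$ from \eqref{E:pole}. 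No logarithmic divergence ever appears, and the angular cancellation you identify is never needed. Your route recovers the same bound but at the cost of tracking all cross terms (leading $s_c$ against $m_{\kappa_j,r}$, the phase $-1$ against the pole, etc.), each of which separately produces the $(1+|x|)$ factor.

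For the complement $\Omega$, your description contains a gap. The pointwise bound $|s_c(\zeta)|\le C(1+|\zeta_I|^2+|\zeta_R\zeta_I|^2)^{-1}$ degenerates on the real axis, so $s_c\notin L^1(dA)$ and the convergence/uniformity of $\iint_{|\zeta|>R}|\zeta-\lambda|^{-1}|s_c|\,dA$ is delicate (the integral is finite for each $\lambda$ but a naive split into $|\zeta-\lambda|\lessgtr 1$ fails). The paper handles this not by pointwise decay but by the Wickerhauser change of variables \eqref{E:wickhauser-spec}, under which $d\overline\zeta\wedge d\zeta/|\zeta-\lambda|$ becomes $C\,d\xi\,d\eta/|p_\lambda(\xi,\eta)|$; then the $L^\infty$ and weighted $L^2(|\zeta_I|\,d\overline\zeta\wedge d\zeta)$ bounds on $s_c$ from \eqref{E:pm-i-new} combine with the estimates \eqref{E:wick-infty-spec} on $1/p_\lambda$ to give a bound that is uniform in $\lambda$ and decays as $|\lambda_I|\to\infty$. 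If your citation of \cite{Wu18} was meant to invoke exactly this, say so; the phrase ``pointwise decay'' undersells what is actually required.
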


\begin{proof}   $\underline{\emph{Step 1  (Near $z\in \{\kappa_1,\kappa_2\}$)}}:$ From \eqref{E:conti}, applying Stokes' theorem,
\beq\label{E:stokes-1}
\ba{rl}
&-\frac 1{2\pi i}\iint_{D_{z}/\RR\cup(D_{z,\epsilon }\cup D_{\lambda,\epsilon})}\frac{s_{c }(\zeta)e^{(\overline\zeta-\zeta)x_1+(\overline\zeta^2-\zeta^2)x_2}m(x, \overline\zeta)} {\zeta-\lambda}d\overline  \zeta\wedge d\zeta\\
=&-\frac 1{2\pi i}\iint_{D_{z}/\RR\cup(D_{z,\epsilon }\cup D_{\lambda,\epsilon})}\frac{\partial_{\overline\zeta}m(x,  \zeta)} {\zeta-\lambda}d\overline  \zeta\wedge d\zeta\\
=&-\frac 1{2\pi i}\int_{\partial\left[D_{ z}/(\RR\cup D_{z,\epsilon }\cup D_{\lambda,\epsilon})\right]} \frac{m(x,\zeta)}{\zeta-\lambda}d\zeta\\
=&-\frac {1} {2\pi i}\oint_{|\zeta-z|=\corr{\kappa}}\frac{  m(x,  \zeta)}{\zeta- \lambda}d\zeta+\frac 1{2\pi i}\int_{\partial D_{z,\epsilon } } \frac{m(x, \zeta)}{\zeta-\lambda}d\zeta +\frac 1{2\pi i}\int_{\partial D_{\lambda,\epsilon} } \frac{m(x,\zeta)}{\zeta-\lambda}d\zeta 
\ea
\eeq
  where $\corr{\kappa=\frac 12\min\{|\kappa_1|,\,|\kappa_2|,\,\kappa_2-\kappa_1\}}$ is defined by Definition \ref{D:terminology}. Note,  by $\lambda\ne \kappa_j$, \eqref{E:discon}, and \eqref{E:cd-decomposition},
\beq\label{E:stokes-2}
\ba{rl}
&-\frac 1{2\pi i}\iint_{D_{z,\epsilon }}\frac{s_{c}(\zeta)e^{(\overline\zeta-\zeta)x_1+(\overline\zeta^2-\zeta^2)x_2}m (x, \overline\zeta)} {\zeta-\lambda}d\overline  \zeta\wedge d\zeta\ \to 0,\\
&-\frac 1{2\pi i}\iint_{D_{\lambda,\epsilon}}\frac{s_{c}(\zeta)e^{(\overline\zeta-\zeta)x_1+(\overline\zeta^2-\zeta^2)x_2}m (x, \overline\zeta)} {\zeta-\lambda}d\overline  \zeta\wedge d\zeta\ \to 0,\\
&+\frac 1{2\pi i}\int_{\partial D_{z,\epsilon } } \frac{m(x,  \zeta)}{\zeta-\lambda}d\zeta\ \to 0,\\
&+\frac 1{2\pi i}\int_{\partial D_{\lambda,\epsilon} } \frac{m(x,  \zeta)}{\zeta-\lambda}d\zeta\ \to m(x,  \lambda),\ \ \textit{as $\epsilon\to 0$}.
\ea
\eeq Therefore 
\beq\label{spectral-kj}
\ba{rl}
&|\mathcal CTE_{z}m|_{L^\infty}\\
=&|-\frac 1{2\pi i}\iint \frac{ E_{ \kappa_j}s_{c}(\zeta)e^{(\overline\zeta-\zeta)x_1+(\overline\zeta^2-\zeta^2)x_2}m (x, \overline\zeta)} {\zeta-\lambda}d\overline  \zeta\wedge d\zeta|_{L^\infty}\\
=&|m(x,\lambda)-\frac {1} {2\pi i}\oint_{|\zeta-z|=\corr{\kappa}}\frac{  m(x,  \zeta)}{\zeta- \lambda}d\zeta|_{L^\infty}\\
\le &C|v_0|_{L^1\cap L^\infty}.

\ea
\eeq 

 $\underline{\emph{Step 2  (Near $0$)}}:$ From \eqref{E:conti},  
 \[
 \ba{l}
 \partial_{\overline\lambda}\left(m(x, \lambda)-\frac {m_{res}(x)}\lambda\right)
=  s_c(\lambda) e^{(\overline\lambda-\lambda)x_1+(\overline\lambda^2-\lambda^2)x_2  }m{(x, \overline\lambda)} , \ \lambda\in D_0/\RR.
 \ea\] Applying Stokes' theorem,
\[
\ba{rl}
&-\frac 1{2\pi i}\iint_{D_{0}/\RR\cup(D_{0,\epsilon }\cup D_{\lambda,\epsilon})}\frac{s_{c }(\zeta)e^{(\overline\zeta-\zeta)x_1+(\overline\zeta^2-\zeta^2)x_2}m(x, \overline\zeta)} {\zeta-\lambda}d\overline  \zeta\wedge d\zeta\\
=&-\frac 1{2\pi i}\iint_{D_{0}/\RR\cup(D_{0,\epsilon }\cup D_{\lambda,\epsilon})}\frac{\partial_{\overline\zeta}\left(m(x,  \zeta)-\frac{m_{res}(x)}\zeta\right)} {\zeta-\lambda}d\overline  \zeta\wedge d\zeta\\
=&-\frac 1{2\pi i}\int_{\partial\left[D_{0}/(\RR\cup D_{0,\epsilon }\cup D_{\lambda,\epsilon})\right]} \frac{m_{0,r}(x,\zeta)}{\zeta-\lambda}d\zeta\\
=&-\frac {1} {2\pi i}\oint_{|\zeta|=\corr{\kappa}}\frac{   m_{0,r}(x,  \zeta)}{\zeta- \lambda}d\zeta+\frac 1{2\pi i}\int_{\partial D_{0,\epsilon } } \frac{   m_{0,r}(x, \zeta)}{\zeta- \lambda}d\zeta\\
&+\frac 1{2\pi i}\int_{\partial D_{\lambda,\epsilon} } \frac{   m_{0,r}(x,\zeta)}{\zeta- \lambda}d\zeta 
\ea
\]Note,  for $x\in\RR^2$, $\lambda\ne 0$ fixed, \eqref{E:pole}, and \eqref{E:cd-decomposition}, 
\beq\label{E:stokes-2-0}
\ba{rl}
&-\frac 1{2\pi i}\iint_{D_{0,\epsilon }}\frac{s_{c}(\zeta)e^{(\overline\zeta-\zeta)x_1+(\overline\zeta^2-\zeta^2)x_2}m (x, \overline\zeta)} {\zeta-\lambda}d\overline  \zeta\wedge d\zeta\ \to 0,\\
&-\frac 1{2\pi i}\iint_{D_{\lambda,\epsilon}}\frac{s_{c}(\zeta)e^{(\overline\zeta-\zeta)x_1+(\overline\zeta^2-\zeta^2)x_2}m (x, \overline\zeta)} {\zeta-\lambda}d\overline  \zeta\wedge d\zeta\ \to 0,\\
&+\frac 1{2\pi i}\int_{\partial D_{0,\epsilon } } \frac{  m_{0,r}(x,  \zeta)}{\zeta- \lambda}d\zeta\ \to 0,\\
&+\frac 1{2\pi i}\int_{\partial D_{\lambda,\epsilon} } \frac{  m_{0,r}(x,  \zeta)}{\zeta- \lambda}d\zeta\ \to m_{0,r}(x,  \lambda),\ \ \textit{as $\epsilon\to 0$}.
\ea
\eeq Therefore 
\beq\label{spectral-kj-0}
\ba{rl}
&|\mathcal CTE_{0}m|_{L^\infty}\\
=&|-\frac 1{2\pi i}\iint \frac{ E_{ 0}s_{c}(\zeta)e^{(\overline\zeta-\zeta)x_1+(\overline\zeta^2-\zeta^2)x_2}m (x, \overline\zeta)} {\zeta-\lambda}d\overline  \zeta\wedge d\zeta|_{L^\infty}\\
=&|m_{0,r}(x,\lambda)-\frac {1} {2\pi i}\oint_{|\zeta|=\corr{\kappa}}\frac{  m_{0,r}(x,  \zeta)}{\zeta- \lambda}d\zeta|_{L^\infty}\\
\le &C(1+|x|)|v_0|_{L^1\cap L^\infty}.

\ea
\eeq 

 $\underline{\emph{Step 3   (Near $\infty$)}}:$ The proof can be applied to $\forall \phi\in W$. Via a change of variables 
 \beq\label{E:wickhauser-spec}
 \ba{c}
2\pi i\xi= \overline\zeta-\zeta,\quad 2\pi i\eta=\overline\zeta^2-\zeta^2,\\
\zeta=-i\pi\xi+ \frac \eta{2\xi}, \, d\overline\zeta\wedge d\zeta=\frac{i\pi}{|\xi|}d\xi d\eta,
\ea
\eeq 
and from \eqref{E:pm-i-new},    \cite[Lemma 2.II]{W87}
\begin{equation}\label{E:wick-infty-spec}
\ba{c}
p_\lambda(\xi,\eta)=(2\pi\xi)^ 2-4\pi i\xi\lambda+2\pi i \eta ,\ \Omega_\lambda=\{(\xi,\eta)\in\RR^2\ :\ |p_\lambda(\xi,\eta)|<1\},\\
\left|\frac 1{p_\lambda}\right|_{L^1(\Omega_\lambda, d\xi d\eta)}\le  \frac C{(1+|\lambda_I|^2)^{1/2}},\ 
\left|\frac 1{p_\lambda}\right|_{L^2(\Omega_\lambda^c,d\xi d\eta)}\le \frac C{(1+|\lambda_I|^2)^{1/4}},
\ea
\end{equation} along with \eqref{E:bdd} and \eqref{E:pm-i-new},  we obtain
\beq\label{E:infty-spec}
\begin{array}{rl}
&|\mathcal C[1-  E_{D_0\cup D_{\kappa_1}\cup D_{\kappa_2}} ] T\phi|\\
\le & C|   \iint \frac {{ \left[1-E_{D_{0}\cup D_{\kappa_1}\cup D_{\kappa_2}}(\zeta) \right]  s_c(\zeta)e^{(\overline\zeta-\zeta)x_1+(\overline\zeta^2-\zeta^2)x_2}\phi}}{ \zeta-\lambda }d\overline\zeta\wedge d\zeta| \\
\le & C| [(1-  E_0 )  ] \phi|_{L^\infty} \iint \frac {|{ \left[1-E_{D_{\kappa_1}\cup D_{\kappa_2}}(\zeta)\right]\mathfrak s_c(\zeta)|}}{|(2\pi\xi)^ 2-4\pi i\xi\lambda+2\pi i \eta|}d\xi d\eta
\\
\le &C| [(1-  E_0 )  ]\phi|_{L^\infty}  \{|\left[1-E_{D_{\kappa_1}\cup D_{\kappa_2}}(\zeta)\right]  s_c(\zeta)| _{L^2(d\xi d\eta)}\left|\frac 1{p_\lambda}\right|_{L^2(\Omega^c_\lambda,d\xi d\eta)} \\
& +{|\left[1-E_{D_{\kappa_1}\cup D_{\kappa_2}}(\zeta)\right]  s_c(\zeta) |}_{L^\infty(d\xi d\eta)}\left|\frac 1{p_\lambda}\right|_{L^1(\Omega_\lambda, d\xi d\eta)}\}\\
\le &C 

\end{array}
\eeq 
\corr{and tends to $0$ as $|\lambda|\to\infty$, $\lambda_I\ne 0$.}
 
\end{proof}

We make several remarks about Theorem \ref{T:bdd} since it is necessary for the justification of a Cauchy integral equation for $m$. 
\begin{itemize}
\item \corr{Due to \eqref{E:pm-i-new},  there is a {missing direction in the $\lambda$-plane} (the real axis)   for  $  s_c(\lambda)$ to decay no matter how smooth the initial data $v_0(x)$ is. Therefore, boundedness of   {$m(x,\lambda)  $} on $\lambda\in D_0^c\cap D_{\kappa_1}^c\cap D_{\kappa_2}^c$    is vital in deriving uniform estimates there.  
\item 
  Boiti-Pempinelli-Pogrebkov's eigenfunction \cite{BP214}, defined by  
\[
\ba{c}
{   \lambda m(x, \lambda)  } ,
\ea
\]which is not bounded near $\infty$, cannot be    admissible. }
 \item  \corr{In \cite{Wu18}, via a KdV approach, 
the boundary condition, i.e., the Sato eigenfunction $\varphi(x,\lambda)$, is replaced by the KdV eigenfunction $\corr{\psi_-(x, \lambda)}$
\[
\ba{c}
\psi_-(x, \lambda)= (1+ \frac {  2i  {\kappa}} {\lambda- i {\kappa}} \frac {1} {1 +e^{2 {\kappa}x_1}} ) e^{(-i\lambda) x_1+(-i\lambda)^2x_2}.
\ea
\]
\begin{itemize}
\item So   the eigenfunction for the boundary value problem of the Lax equation is 
\[\ba{c}{  \frac{\lambda}{\lambda- {i\kappa }}\tilde m(x, \lambda),\quad\tilde m\in L^\infty  },\ea\]whileas $i\kappa$ is also a singularity of the corresponding $\tilde s_c$. Therefore, a standard Cauchy Theorem (shown in the proof of \eqref{E:stokes-2}) will collapse at $D_{{i\kappa },\epsilon}$, $\partial D_{{i\kappa },\epsilon}$.
\item We then introduced a {regularization at ${i\kappa }$}
\beq\label{E:kdv-reg}
\ba{c}{{(\lambda-{i\kappa })} \frac{\lambda}{\lambda- {i\kappa }}\tilde m(x, \lambda)  } \ea\eeq to remedy the problem near ${i\kappa }$. But \eqref{E:kdv-reg} becomes unbounded at $\infty$.

\item Finally, we introduce an {extra pole $2{i\kappa }$} to tame the singularities at   $\infty$  and obtain
\beq\label{E:kdv-eigen}\ba{c}{  \frac{{\lambda- {i\kappa }}}{{\lambda-2{i\kappa }}}\frac{\lambda}{\lambda- {i\kappa }}\tilde m(x, \lambda)  }.\ea\eeq 
\end{itemize}   } 
    
\corr{The eigenfunction \eqref{E:kdv-eigen} is    admissible.}

\end{itemize}


\section{The Cauchy integral equation}\label{S:cauchy}

\begin{theorem}\label{T:cauchy-integral-eq}
If
\[
\ba{c}
u_0(x )=\frac{(\kappa_1-\kappa_2)^2}2\textrm{sech}^2\frac{\theta_1-\theta_2-\ln a}2,\\
\partial_x^k v_0\in {L^1}\cap L^\infty,\ |k|\le 4 ,\\
|v_0 |_{{L^1\cap L^\infty}}\ll 1, \ v_0(x)\in\RR,
\ea
\] then the eigenfunction $  m $  derived from Theorem \ref{T:KP-eigen-existence}  satisfies 
\beq\label{E:eigen-charac}
\ba{c}
 m (x, \lambda)\in W
 \ea
 \eeq and the Cauchy integral equation
\beq\label{E:cauchy-integral-equation-sf}
\begin{array}{cl}
  {  m}(x, \lambda) =1+\frac{  m_{ res }(x  )}{\lambda  }  +\mathcal CT
   m , & \forall\lambda\ne  0,
\end{array}
\eeq  where $W$ is defined by   Definition \ref{D:quadrature-hat}.

\end{theorem}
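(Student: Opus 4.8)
The plan is to prove the two assertions of Theorem \ref{T:cauchy-integral-eq} in turn: first the membership $m\in W$, by checking the structural data of $m$ supplied by Theorem \ref{T:KP-eigen-existence} against the four defining conditions of Definition \ref{D:quadrature-hat}, and then the singular integral equation \eqref{E:cauchy-integral-equation-sf}, by applying the generalized Cauchy (Cauchy--Pompeiu) formula to the $\overline\partial$-equation \eqref{E:conti}.

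For $m\in W$, conditions (i)--(iii) are read off directly from Theorem \ref{T:KP-eigen-existence}: condition (i) is the reality $m(x,\lambda)=\overline{m(x,\overline\lambda)}$, condition (ii) is the boundedness \eqref{E:bdd}, and condition (iii) is the pole expansion \eqref{E:pole}, identifying $\phi_{res}=m_{res}$ and $\phi_{0,r}=m_{0,r}$ with their recorded $L^\infty$ bounds. The substance is condition (iv). I would extract the leading singular parts $m_{\kappa_1,0},m_{\kappa_2,0}$ from \eqref{E:discon}, invoke the symmetry \eqref{E:green-exp-2} (namely $\Theta_2=-\tfrac{\kappa_1}{\kappa_2}e^{(\kappa_1-\kappa_2)x_1+(\kappa_1^2-\kappa_2^2)x_2-\ln a}\Theta_1$ and $\gamma_1=\gamma_2=\gamma$), and observe that at the conjugate limit points $\lambda_2=\kappa_2+0^+e^{i\alpha}$, $\lambda_1=\kappa_1+0^+e^{i(\pi+\alpha)}$ the two arguments $\tfrac{\kappa_2-\lambda_{2,R}}{|\lambda_{2,I}|}$ and $\tfrac{\lambda_{1,R}-\kappa_1}{|\lambda_{1,I}|}$ both equal $\tfrac{-\cos\alpha}{|\sin\alpha|}$, so the two $\cot^{-1}$ factors agree. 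Since $m_{\kappa_j,r}(x,\kappa_j)=0$, the remainders drop out in the limit, and the leading parts produce exactly $m(x,\lambda_2)=s_d\,e^{(\kappa_1-\kappa_2)x_1+(\kappa_1^2-\kappa_2^2)x_2}m(x,\lambda_1)$ with $s_d=-\tfrac{\kappa_1}{\kappa_2}e^{-\ln a}$. Equivalently, (iv) can be obtained by applying the Green-function symmetry \eqref{E:green-exp-1-g} together with the corresponding symmetry of $\chi$ to the integral equation \eqref{E:spectral-integral}. The hypothesis $|k|\le 4$ is what guarantees the sharpened bound $\tfrac{\partial_s m_{\kappa_j,r}}{1+|x|}\in L^\infty(D_{\kappa_j})$ needed to close (iv).

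For the integral equation the governing identity is $\partial_{\overline\lambda}m=Tm$ from \eqref{E:conti}, together with the pole datum $\tfrac{m_{res}}{\lambda}$ at $0$ and the normalization $m\to 1$ as $|\lambda|\to\infty$ (from $\chi\to 1$ and the vanishing of the Cauchy term in Theorem \ref{T:bdd}). I would apply the Cauchy--Pompeiu formula to $m-1-\tfrac{m_{res}}{\lambda}$, which decays at $\infty$ and satisfies $\partial_{\overline\lambda}\bigl(m-1-\tfrac{m_{res}}{\lambda}\bigr)=Tm$ on $\CC\setminus(\RR\cup\{0\})$; here I use that $\mathcal C$ inverts $\partial_{\overline\lambda}$. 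Concretely I would excise the small disks $D_{0,\epsilon}$, $D_{\kappa_j,\epsilon}$, $D_{\lambda,\epsilon}$ and a large circle $|\zeta|=R$, run Stokes' theorem exactly as in \eqref{E:stokes-1}--\eqref{spectral-kj-0} of the proof of Theorem \ref{T:bdd}, and let $\epsilon\to 0$, $R\to\infty$: the large circle yields the constant $1$, the disk $D_{\lambda,\epsilon}$ reproduces $m(x,\lambda)$, the pole at $0$ supplies $\tfrac{m_{res}}{\lambda}$, and the area integral assembles into $\mathcal CTm$. Convergence of these limits is secured by the weighted decay \eqref{E:pm-i-new} of $s_c$ and the Wickerhauser-type bound \eqref{E:wick-infty-spec}, with the stronger assumption $|k|\le 4$ providing the margin that makes the estimates uniform and places $\mathcal CTm$ in $W$.

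I expect the main obstacle to be the passage through the discontinuity points $\kappa_1,\kappa_2$. There $m$ has, not a pole, but a genuine directional discontinuity generated by the $\cot^{-1}$ terms, so excising $D_{\kappa_j,\epsilon}$ and letting $\epsilon\to 0$ leaves residual circles $\oint_{|\zeta-\kappa_j|=\kappa}$ (cf.\ \eqref{spectral-kj}). The resolution is to show that these residual contours, together with the branch-cut contributions of $\cot^{-1}$ along $\RR$, recombine into the area integral $\mathcal CTm$ with no leftover singular term; this is precisely where the jump relation (iv), built from the symmetries \eqref{E:green-exp-1-g} and \eqref{E:green-exp-2}, is needed, forcing the $\kappa_1$ and $\kappa_2$ contributions to be compatible so that the decomposition $\mathcal CTm=\sum_{z}\mathcal CTE_z m+\mathcal CT(1-\sum_z E_z)m$ reassembles $m-1-\tfrac{m_{res}}{\lambda}$ exactly. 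Verifying this cancellation, and the attendant uniform control of $s_c$ in the missing real-axis direction, is the technically heaviest part of the argument.
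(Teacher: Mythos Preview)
Your treatment of $m\in W$ is fine and matches the paper. The gap is in the derivation of the Cauchy integral equation. You want to run Cauchy--Pompeiu on $m-1-\tfrac{m_{res}}{\lambda}$ and let the large circle tend to zero, which presupposes that $m(x,\lambda)\to 1$ as $|\lambda|\to\infty$ for fixed $x$. That asymptotic is not available: $\chi\to 1$ is true, but $m=(1+G\ast v_0)^{-1}\chi$, and the only bound established on $G$ is the \emph{uniform} estimate of Proposition~\ref{P:eigen-green}, which gives no decay in $\lambda$. All the paper has at this stage is \eqref{E:bdd}, i.e.\ boundedness of $(1-E_0)m$, not convergence to $1$; likewise the boundary relation in \eqref{E:direct-spectral} is $m\to\chi$ as $|x|\to\infty$ for \emph{fixed} $\lambda$, not as $|\lambda|\to\infty$. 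So your large-circle step would only yield $m=g(x)+\tfrac{m_{res}}{\lambda}+\mathcal CTm$ for some undetermined bounded $g(x)$, and the argument stalls.

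This is exactly where the paper's proof diverges from yours, and it is also the true role of the hypothesis $|k|\le 4$ (you have mislocated it: the $m_{\kappa_j,r}$ estimates in \eqref{E:discon} already hold under $|k|\le 2$). The paper first obtains $m=g(x)+\tfrac{m_{res}}{\lambda}+\mathcal CTm$ via Liouville, then applies the spectral operator $\partial_{x_2}-\partial_{x_1}^2-2\lambda\partial_{x_1}$ to both sides. Controlling $\partial_{x_j}\mathcal CTm$ produces integrands with the extra weights $(\overline\lambda-\lambda)$, $(\overline\lambda-\lambda)^2$, $(\overline\lambda^2-\lambda^2)$ against $s_c$; keeping these in $L^\infty\cap L^2(|\lambda_I|\,d\overline\lambda\wedge d\lambda)$ is what forces two more derivatives on $v_0$, hence $|k|\le 4$. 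Comparing growth in $\lambda$ then shows $\partial_{x_1}g=0$, so $g=g(x_2)$; only \emph{after} this reduction can one pick $x_1\gg 1$ and $|\lambda|\gg 1$ (now legitimately, since $g$ is independent of both) and use \eqref{E:direct-spectral} together with Theorem~\ref{T:bdd} to conclude $g\equiv 1$. The passage through $\kappa_1,\kappa_2$ that you flag as the main obstacle is, by contrast, already absorbed into Theorem~\ref{T:bdd} and causes no extra trouble here.
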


\begin{proof}   
Theorem \ref{T:KP-eigen-existence} implies, for $x$ fixed, 
\begin{gather}
\ba{c}
  m(x, \lambda)-\frac{ m_{  res }(x   )}{\lambda } \quad\in L^\infty,\ea \label{E:regularize-m-ifty}\\
\ba{c}
  E_{0,n} T  m(x, \lambda)\in  L^1(d\overline\lambda\wedge d\lambda), 
\ea\label{E:t-regularize-out-sing}
\end{gather} for $\forall   n>0$. Here $E_{z,a}$ is defined by Definition \ref{D:terminology}. 
Exploiting \eqref{E:t-regularize-out-sing} and applying \cite[\S I, Theorem 1.13, Theorem 1.14]{V62}, one 
derives
\beq\label{E:l-1-dbar}
\ba{c}
\partial_{\overline\lambda}\mathcal CE_{0,n} T  m(x, \lambda) 
 =E_{0,n} T  m(x, \lambda) \in  L^1(d\overline\lambda\wedge d\lambda).
 \ea
\eeq  Therefore, together with Theorem \ref{T:sd-continuous}, Theorem \ref{T:bdd},
\beq\label{E:debar-initial}
\ba{c}
\partial_{\overline\lambda}\left[ m(x, \lambda)-\frac{  m_{ res }(x  )}{\lambda }-\mathcal CT m(x, \lambda)\right]=0.
\ea
\eeq
For $\forall  x$ fixed, applying Theorem \ref{T:bdd}, \eqref{E:regularize-m-ifty}, \eqref{E:debar-initial},    and  Liouville's theorem,  one concludes  
\beq\label{E:unique-direct-problem-q}
\ba{c}
  m(x, \lambda)=g(x )+\frac{  m_{ res }(x )}{\lambda }  +\mathcal CT  m(x ,\lambda).
\ea
\eeq  

Equation \eqref{E:direct-spectral} and a direct computation  yield:
\beq\label{E:unique-computation}
\ba{rl}
&u(x )  m(x,  \lambda)\\
=&\left(\partial_{x_2}-\partial_{x_1}^2-2 \lambda\partial_{x_1}\right)  m(x,\lambda)\\
=&\left(\partial_{x_2}-\partial_{x_1}^2-2 \lambda\partial_{x_1}\right)
[g(x )+\frac    {\mathfrak m_{ res }(x )}{\lambda }    ]\\
&+\left(\partial_{x_2}-\partial_{x_1}^2-2 \lambda\partial_{x_1}\right)\mathcal CT  m.
\ea
\eeq
Note that
\begin{eqnarray*}
&&\partial_{x_1}\mathcal CT  m=\mathcal C[ (\overline\lambda-\lambda)T  m+T(\partial_{x_1}   m)],\\
&&\partial_{x_1}^2\mathcal CT  m=\mathcal C[(\overline\lambda -\lambda  )^2T  m+2 (\overline\lambda-\lambda)T(\partial_{x_1}   m)+T(\partial_{x_1} ^2  m)],\\
&&\partial_{x_2}\mathcal CT  m=\mathcal C[ (\overline\lambda^2- \lambda^2) T  m +T(\partial_{x_2}  m)].
\end{eqnarray*} 
Applying the Fourier transform theory, if $v_0(x)$ has $4$ derivatives in  $L^1\cap L^\infty$, then  
 \[
 \ba{l}
 (1-E_{D_{\kappa_1}\cup D_{\kappa_2}}(\lambda))(\overline\lambda-\lambda)  s_c(\lambda),\\
 (1-E_{D_{\kappa_1}\cup D_{\kappa_2}}(\lambda))(\overline\lambda-\lambda)^2   s_c(\lambda),\\
 (1-E_{D_{\kappa_1}\cup D_{\kappa_2}}(\lambda))(\overline\lambda^2- \lambda^2)   s_c(\lambda),
 \ea
 \] 
are all bounded in $L^\infty\cap L^2(|\lambda_I|d\overline\lambda\wedge d\lambda)$. Therefore, if  $\partial_{x }^kv_0 \in  {L^1\cap L^\infty}$, $ 0\le |k|\le 4 $,  one can adapt the proof of \corr{$\underline{\emph{Step 1}}$} - $\underline{\emph{Step 3}}$ in Theorem \ref{T:bdd} and  derive, as $|\lambda|\to\infty$, $\lambda_I\ne 0$, 
\[ 
\ba{c}
\left(\partial_{x_2}-\partial_{x_1}^2-2 \lambda\partial_{x_1}\right)\mathcal CT  m\to o(|\lambda|).
\ea
\] 
So 
comparing growth in \eqref{E:unique-computation}, we conclude
  \eqref{E:unique-direct-problem-q} turns into
\beq\label{E:lambda-infty}
\ba{c}
  m(x, \lambda)-1=g(x_2)-1+\frac    { m_{ res }(x )}{\lambda }+ \mathcal C T   m(x, \lambda).
\ea
\eeq
Fix $x_2$, and let $\epsilon$ be given. \corr{Let  $x_1\gg 1$, $|\lambda|\gg 1$, $\lambda_I\ne 0$, one has
\[
|\frac    {   m_{ res }(x )}{\lambda} +\mathcal C T    m(x, \lambda)|\le \frac\epsilon 2
\]by Theorem \ref{T:bdd} and 
\[
\ba{c}
|m(x,\lambda)-1|\le\frac\epsilon 2
\ea\] by the boundary property   \eqref{E:direct-spectral}.} So we justify  $g\equiv 1$ and establish \eqref{E:cauchy-integral-equation-sf}.


\end{proof}

Theorem \ref{T:cauchy-integral-eq} implies that  
   the residue $  m_{ res }(x  )$ at the simple pole $\lambda=0$ and   $  m({\color{red} x},\kappa_j+0^+e^{i\alpha})$ at $\kappa_j$ satisfy the linear system
\beq\label{E:+i-pogrebkov-x-function-new}
\ba{c}
 \frac{  m_{ res }(x  )}{ \kappa_1}  =-1+  m (x, \kappa_1+0^+e^{i(\pi+\alpha)})  - \mathcal C_{ \kappa_1+0^+e^{i(\pi+\alpha)}} T  m  , \\
 \frac{  m_{ res }(x  )}{ \kappa_2}  =-1+  m (x, \kappa_2+0^+e^{i\alpha}) - \mathcal C_{ \kappa_2+0^+e^{i\alpha}}T  m ,\\
   m (x,  \kappa_2+0^+e^{i\alpha})=  s_de^{(\kappa_1-\kappa_2)x_1+(\kappa_1^2-\kappa_2^2)x_2} m (x, \kappa_1+0^+e^{i(\pi+\alpha)})
\ea
\eeq for $\forall 0<\alpha <2\pi $, with  
   $T$, $  s_d$, and $\mathcal C$ defined by Definition \ref{D:spectral}.

\begin{example}\label{Ex:cauchy-eq-v-0}
If   $  s_c\equiv 0$, $s_d=-\frac {\kappa_1}{\kappa_2}e^{-\ln a}$. So \eqref{E:cauchy-integral-equation-sf} and \eqref{E:+i-pogrebkov-x-function-new} reduce to
\[\ba{c}
 {  m}(x, \lambda) =1+\frac{  m_{ res }(x   )}{\lambda } ,\\
    m (x, \kappa_2 )=  {  s}_de^{(\kappa_1-\kappa_2)x_1+(\kappa_1^2-\kappa_2^2)x_2} m (x,\kappa_1)   
\ea\] which yield
\begin{gather}
\ba {c}\frac{  m_{ res }(x   )}{ {\kappa_1}}  =-1+  m (x,\kappa_1)      ,\ea\label{E:2} \\
 \ba{c}\frac{  m_{ res }(x  )}{ \kappa_2} =-1+ m (x,\kappa_2)  ,\ea\label{E:3}\\
 \ba{c}  m (x,\kappa_2)  =  {  s}_de^{(\kappa_1-\kappa_2)x_1+(\kappa_1^2-\kappa_2^2)x_2} m (x,\kappa_1)  
\ea\label{E:4}
\end{gather} Namely,
\[
\ba{c}
   m (x, \kappa_1  )=+\frac{\kappa_1-\kappa_2}{\kappa_1}\frac 1{1+e^{ (\kappa_1-\kappa_2)x_1+(\kappa_1^2-\kappa_2^2)x_2-\ln a }},\\ 
   m (x, \kappa_2 )=-\frac{\kappa_1-\kappa_2}{\kappa_2}\frac 1{1+e^{-[ (\kappa_1-\kappa_2)x_1+(\kappa_1^2-\kappa_2^2)x_2-\ln a] }}, 
   \ea\] and
\[\ba{rl}   
 &m(x,\lambda)\\
 =&
1-\frac 1{\lambda }( \frac{\kappa_1}{1+e^{-[ (\kappa_1-\kappa_2)x_1+(\kappa_1^2-\kappa_2^2)x_2-\ln a] }}+\frac{\kappa_2}{1+e^{ (\kappa_1-\kappa_2)x_1+(\kappa_1^2-\kappa_2^2)x_2-\ln a  }})\\
=&\chi(x,\lambda). 
\ea
\]
\end{example}

\end{document}